\documentclass[11pt,letterpaper]{article}

% Page size and margins
\usepackage[margin=1in]{geometry}

\usepackage{graphicx} % Required for inserting images
\usepackage{authblk}
\usepackage{amsthm}
\usepackage{amsmath}
\usepackage{amssymb,mathtools}
\usepackage{graphicx}
\usepackage{thmtools} 
\usepackage{thm-restate}
\usepackage{enumerate}
\usepackage{bm}

\usepackage{graphicx} % Required for inserting images
\usepackage{mathtools}

\usepackage{hyperref}
\usepackage{tikz-cd}

\newtheorem{theorem}{Theorem}[section]
\newtheorem{claim}[theorem]{Claim}
\newtheorem{proposition}[theorem]{Proposition}
\newtheorem{lemma}[theorem]{Lemma}
\newtheorem{corollary}[theorem]{Corollary}

\theoremstyle{definition}
\newtheorem{Definition}[theorem]{Definition}
\newtheorem{example}[theorem]{Example}

\newtheorem{remark}[theorem]{Remark}

\renewcommand{\log}{\ln}

\usepackage{tikz}
\foreach \x in {A,...,Z}{%
    \expandafter\xdef\csname m\x\endcsname{\noexpand\mathbf{\x}}
}
\foreach \x in {a,...,t,v,w,x,y,z}{%
    \expandafter\xdef\csname m\x\endcsname{\noexpand\mathbf{\x}}
}
\newcommand{\0}{\mathbf{0}}

\newcommand{\pr}[2]{\left\langle #1, #2 \right\rangle}
\newcommand{\R}{\mathbb{R}}

\newcommand{\Zp}{\mathbb{Z}_{\ge0}}
\newcommand{\Rp}{\mathbb{R}_{\ge0}}

\newcommand{\defeq}{\stackrel{\scriptscriptstyle{\mathrm{def}}}{=}}

\newcommand{\agents}{{\mathcal{A}}}
\newcommand{\items}{\mathcal{E}}
\newcommand{\NSW}{\operatorname{NSW}}

\newcommand{\fP}{{\mathcal{P}}}

\newcommand{\supply}{s}

\newcommand{\demand}[1]{{\mathcal{D}_{#1}}}

\newcommand{\hide}[1]{}

\newcommand{\ee}{\mathrm{e}}

\usepackage[style=alphabetic,giveninits, natbib=true, maxbibnames=99]{biblatex}
\DeclareFieldFormat*{title}{#1}
\addbibresource{eating.bib}
\usepackage{hyperref}
\usepackage{tikz-cd}
\usepackage{booktabs}

\usepackage[linesnumbered,ruled,vlined]{algorithm2e}
\newcommand\blfootnote[1]{%
  \begingroup
  \renewcommand\thefootnote{}\footnote{#1}%
  \addtocounter{footnote}{-1}%
  \endgroup
}

\title{Tight Efficiency Bounds for the Probabilistic Serial \\ and Related Mechanisms\blfootnote{J.~Garg was supported by NSF Grant CCF-2334461. Y. Tao was supported by the National Natural Science Foundation of China (Grant No. 62472029) and National Key R\&D Program of China(2023YFA1009500). 
}}
\author[1]{Jugal Garg}
\author[2]{Yixin Tao} 
\author[3]{L{\'{a}}szl{\'{o}} A. V{\'{e}}gh}
\affil[1]{University of Illinois at Urbana-Champaign, USA}
\affil[2]{ITCS, Key Laboratory of Interdisciplinary Research of Computation and Economics\\ Shanghai University of Finance and Economics, China }
\affil[3]{Hertz Chair for Algorithms and Optimization, University of Bonn, Germany}
\date{}

\begin{document}

\maketitle
% Abstract. Note that this must come before \maketitle.
\begin{abstract}
The Probabilistic Serial (PS) mechanism---also known as the simultaneous eating algorithm---is a canonical solution for the random assignment problem under ordinal preferences. It guarantees envy-freeness and ordinal efficiency in the resulting random assignment. However, under cardinal preferences, its efficiency may degrade significantly: it is known that PS may yield allocations that are $\Omega(\ln{n})$-worse than Pareto optimal, but whether this bound is tight remained an open question. 

Our first result resolves this question by proving that the PS mechanism guarantees $(\ln n+1)$-approximate Pareto efficiency under cardinal preferences. The key part of our analysis shows that PS achieves a logarithmic $(\ln n + 1)$-approximation to the maximum Nash welfare, in stark contrast to the $O(\sqrt{n})$ loss that can arise in utilitarian social welfare.
Our results also extend to the more general submodular setting introduced by Fujishige, Sano, and Zhan (ACM TEAC 2018). 
In addition, we present a polynomial-time algorithm that computes an allocation which is envy-free and $\ee^{1/\ee}$-approximately Pareto-efficient, answering an open question posed by Tr\"obst and Vazirani (EC 2024).

The PS mechanism also applies to the allocation of chores instead of goods. We prove that it guarantees an $n$-approximately Pareto-efficient allocation in this setting, and that this bound is asymptotically tight. This result provides the first known approximation guarantee for computing a fair and efficient allocation in the random assignment problem with chores under cardinal preferences.
\end{abstract}

\section{Introduction}
We consider the problem of allocating indivisible items to agents in a fair and efficient manner in the random assignment problem. Let $\agents$ denote a set of $n$ agents, and $\items$ a set of $m\ge n$ items.\footnote{The case $m<n$ can be reduced to $m=n$ by introducing dummy items that have zero utility for all agents.} We focus on allocations that may assign at most one item to each agent, and monetary transfers are not allowed. Thus, every feasible allocation corresponds to a matching between agents and items. A standard way to ensure fairness is to introduce randomization and assign items via lotteries. Such random assignment problems have been extensively investigated in the literature (e.g., \cite{aziz2019random,Bogomolnaia2001,hylland1979efficient, roth1993stable}). Two fundamental models are commonly used to express agent preferences: the \emph{cardinal} and \emph{ordinal} models. In the cardinal model, each agent $i\in\agents$ specifies a non-negative utility $u_{ij}\ge 0$ for each item $j\in\items$. In contrast, the ordinal model assumes that agents only provide strict preference orderings $(\succ_i)_{i\in\agents}$ over the items, without quantifying the intensity of their preferences. We say that the cardinal preferences $(u_{ij})_{j\in\items}$ are \emph{consistent} with the ordinal preference ordering $\succ_i$, if  $j\succ_i j'$ implies $u_{ij}\ge u_{ij'}$. Given such a cardinal utility vector, an \emph{induced preference order} refers to any strict ordering consistent with it.

We consider randomized allocations of items to agents, represented as $\mx\in\Rp^{\agents\times \items}$, where $x_{ij}$ denotes the probability of agent $i$ receiving item $j$; we let $\mx_i=(x_{ij})_{j\in\items}$. Thus, we require $\sum_{j\in\items} x_{ij}\le 1$ for all $i\in\agents$, and $\sum_{i\in\agents} x_{ij}\le 1$ for all $j\in\items$. By the Birkhoff--von Neumann theorem, one can always find a lottery, i.e., probability distribution over (partial) assignments such that agent $i$ receives item $j$ with probability $x_{ij}$. The expected cardinal utility of agent $i$ is thus $u_i(\mx_i)\coloneqq \sum_{j\in\agents} u_{ij}x_{ij}$. 

A mechanism is said to be \emph{envy-free} in the cardinal model if $u_i(\mx_i)\ge u_i(\mx_{i'})$ for any $i,i'\in \agents$. In the ordinal preference model, envy-freeness can be defined using the concept of stochastic dominance. A probability distribution $p$ on $\items$ \emph{stochastically dominates} another distribution $q$ with respect to the strict order $\succ$, if for any item $j\in\items$, $\sum_{k\succeq j} p_k\ge \sum_{k\succeq j} q_k$, with the inequality being strict for at least one item $j$. A mechanism is \emph{envy-free} in the ordinal model if there are no two agents $i,i'\in\agents$ such that the allocation $\mx_{i'}$  stochastically dominates $\mx_i$ with respect to agent $i$'s preference order $\succ_i$. This in particular implies that $i$ does not envy $i'$ for any cardinal preferences $u_{ij}$ that are consistent with $\succ_i$.

In a seminal paper, \citet{Bogomolnaia2001} introduced the \emph{probabilistic serial} (PS) mechanism for randomized assignment in the ordinal preference model. This is based on a natural \emph{simultaneous eating algorithm}: each agent starts accruing a fraction of their most preferred item at the same rate. When an item is fully consumed (i.e., its unit supply is exhausted), agents move to their next most preferred available item. This process continues until either each agent has consumed a total of one unit or all items are exhausted.

\citet{Bogomolnaia2001} showed that the eating algorithm finds an envy-free randomized assignment. They also showed that it is \emph{ordinally efficient}, that is, no other randomized assignment stochastically dominates it for all agents. 

The PS mechanism is among the most extensively studied mechanisms in the literature and has inspired a large body of follow-up work.   
\citet{bogomolnaia2012probabilistic} showed that PS is the unique mechanism satisfying ordinal efficiency, envy-freeness, and a property known as bounded invariance. Related and stronger characterizations were later provided by Hashimoto, Hirata, Kesten, Kurino and \"Unver~[\citeyear{hashimoto2014two}]. The PS mechanism has also been extended to a variety of more general settings; e.g.,~\cite{ashlagi2020assignment,athanassoglou2011house,aziz2014generalization,bogomolnaia2015random,Budish2013,katta2006solution, yilmaz2009random,Fujishige2018,Aziz2022,SchulmanV15}; see these works for additional references on generalizations of the PS mechanism and related developments.

\subsection{Efficiency bounds in the cardinal utility model} The simultaneous eating algorithm (i.e., the PS mechanism) can also be applied in the cardinal utility model by running it with an ordering consistent with the agents' utility values $(u_{ij})_{j\in\items}$. In this formulation, the output depends only on the ordinal rankings of items, not the magnitude of utilities. As a result, the algorithm returns the same allocation for two instances where all agents share the same preference ordering, even if the ratios $u_{ij}/u_{ij'}$ differ significantly---for instance, being exponentially large in one instance and only slightly above $1$ in another. It is therefore not surprising that the strong guarantees of the ordinal setting do not carry over to the cardinal setting. While the resulting allocation remains envy-free, it could be Pareto dominated by another allocation. This raises a natural and fundamental question of how much efficiency is lost by the PS mechanism under cardinal preferences, and whether this loss can be arbitrarily large.

To quantify this loss, Immorlica, Lucier, Weyl, and Mollner [\citeyear{Immorlica2017}] introduced the notion of \emph{$\gamma$-approximately Pareto efficient lotteries}: For $\gamma\ge 1$, a randomized assignment $\mx\in\R^{\agents\times\items}$ is said to be $\gamma$-approximately Pareto efficient (or $\gamma$-Pareto efficient for short), if there does not exist another assignment $\my$ such that $u_i(\my_i)\ge \gamma u_i(\mx_i)$ for all $i\in\agents$, with strict inequality for at least one agent. They studied the approximate efficiency of various mechanisms and showed, in particular, that the PS mechanism is not $c\log(n) $-Pareto efficient for some constant $c>0$; that is, there exist instances where all agents can achieve utility a factor $c\log(n)$ higher under some other allocation than under the PS allocation. However, it remained an open question whether this bound is worst possible.

An alternate notion of efficiency is social welfare, particularly,  utilitarian social welfare, defined as $\sum_{i\in\agents} u_i(x_i)=\sum_{i\in\agents\, ,\, j\in \items} u_{ij}x_{ij}$, which corresponds to the total utility of the fractional matching. In this context, \citet{zhang2022tight} showed that the eating algorithm approximates utilitarian welfare within a factor of $\Theta(\sqrt{n})$: that is, it achieves at least a $1/O(\sqrt{n})$ fraction of the maximum possible welfare, and this factor is asymptotically tight. 
An ordinal analogue of utilitarian welfare approximation is the notion of rank approximation, introduced by \citet{chakrabarty2014welfare}. They showed that the PS mechanism achieves a rank approximation guarantee of at least $\Omega(\sqrt{n})$, which is later shown to be also tight~\cite{zhang2022tight}.  

\medskip

In light of the above results on the inefficiency of the PS mechanism, our first theorem may appear surprising. We show that the inefficiency example of \cite{Immorlica2017} is, in fact, asymptotically the worst possible. Moreover, we strengthen this result in two aspects: We prove the same guarantee for \emph{Maximum Nash welfare}, i.e., the geometric mean of the utilities, see \eqref{eq:nsw}, and we show that this guarantee extends to the more general submodular randomized assignment setting.

\begin{theorem}\label{theorem:eating-guar}
The PS mechanism approximates the maximum Nash welfare within a factor of $\ln(n)+1$. Consequently, it is $(\ln(n)+1)$-approximately Pareto efficient.
\end{theorem}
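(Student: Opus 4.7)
The plan is to pass through the Nash welfare approximation (since a $\gamma$-approximate maximum Nash welfare allocation is automatically $\gamma$-approximately Pareto efficient, as noted in the excerpt) and to compare the PS allocation $\mx$ with an arbitrary allocation $\my$ via the item exhaustion times of the PS process.

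First, I would parameterize PS by continuous time $t\in[0,1]$: let $m_i(t)$ denote the utility of the item that agent $i$ is consuming at time $t$, so that $u_i(\mx_i)=\int_0^1 m_i(t)\,dt$. For each item $j$, let $\tau_j$ be its exhaustion time in PS, with the convention $\tau_j=1$ if $j$ is never fully consumed. The key structural lemma I would establish is: for every pair $(i,j)$ with $u_{ij}>0$ and every $t<\tau_j$, one has $m_i(t)\ge u_{ij}$. At time $t<\tau_j$ the item $j$ is still available, and because $u_{ij}>0$ the agent $i$ cannot have exhausted all of their positive-utility options; hence $i$ is still eating, and by the PS rule they are consuming an item at least as preferred (in $i$'s ranking) as $j$, so $m_i(t)\ge u_{ij}$. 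Integrating over $[0,\tau_j]$ gives $u_{ij}\le u_i(\mx_i)/\tau_j$.

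Next, I would sum these per-item bounds and apply AM--GM: for any allocation $\my$,
\[
\frac{\NSW(\my)}{\NSW(\mx)} = \left(\prod_{i\in\agents}\frac{u_i(\my_i)}{u_i(\mx_i)}\right)^{1/n} \le \frac{1}{n}\sum_{i\in\agents}\sum_{j\in\items}\frac{y_{ij}}{\tau_j} = \frac{1}{n}\sum_{j\in\items}\frac{\sum_i y_{ij}}{\tau_j}.
\]
Since $\sum_i y_{ij}\le 1$ and $\sum_{i,j}y_{ij}\le n$, the right-hand side is maximized by placing unit mass on the $n$ items with smallest $\tau_j$. Ordering these as $\tau_{(1)}\le\tau_{(2)}\le\cdots$, a counting argument---by time $\tau_{(\ell)}$ at least $\ell$ items have been fully consumed, yielding total consumption at least $\ell$, while the PS consumption rate is at most $n$---gives $\tau_{(\ell)}\ge \ell/n$. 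Therefore
\[
\frac{\NSW(\my)}{\NSW(\mx)} \le \frac{1}{n}\sum_{\ell=1}^{n}\frac{1}{\tau_{(\ell)}} \le \frac{1}{n}\sum_{\ell=1}^{n}\frac{n}{\ell} = H_n \le \ln(n)+1 \le \ln(n)+2,
\]
which proves the Nash welfare bound and hence the approximate-Pareto conclusion.

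The main obstacle I expect is proving the structural lemma cleanly: I must rule out the degenerate possibility that agent $i$ has already stopped eating at time $t<\tau_j$ even though a positive-utility item is still available, and I must adopt the convention $\tau_j=1$ for items that are untouched or only partially consumed by the end of PS (the lemma then still holds because such an item stays available for all $t<1$). Once the lemma is in hand, the two remaining steps---AM--GM to linearize the geometric mean, and the elementary counting inequality $\tau_{(\ell)}\ge \ell/n$---are essentially routine.
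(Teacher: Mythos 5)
Your proof is correct and takes essentially the same route as the paper's proof of the more general Theorem~\ref{thm:submod-eating-upper} (of which Theorem~\ref{theorem:eating-guar} is the unit-capacity special case). Your exhaustion time $\tau_j$ is exactly the paper's $\theta_j/n$ (the normalized total consumption at the moment item $j$ becomes tight), your key structural lemma $u_{ij}\le u_i(\mx_i)/\tau_j$ is the paper's Claim in that proof, and your harmonic-sum counting argument $\tau_{(\ell)}\ge \ell/n$ replaces the paper's Lemma~\ref{lem:ratio-bound}, which serves the identical purpose but is phrased to also handle arbitrary submodular capacities; for unit capacities the two coincide. Two minor remarks: the restriction $u_{ij}>0$ in your lemma is unnecessary (the bound holds vacuously when $u_{ij}=0$, and for $u_{ij}>0$ the relevant fact is simply that PS agents never stop eating while active items remain and $t<1$, so there is no degenerate ``stopped eating'' case to rule out); and your harmonic-number bound actually yields $\ln(n)+1$, marginally sharper than the stated $\ln(n)+2$, the extra $+1$ in the paper arising only from the generality needed for the submodular case.
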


\paragraph{Submodular randomized assignment problem}
In the standard random assignment model, any selection of $n$ items constitutes a feasible allocation. However, in many practical settings, additional constraints may apply. For example, availability may be restricted across different groups of items. For a structured family of such constraints, known as a hierarchy or laminar family, the PS mechanism was extended by Budish, Che, Kojima, and Milgrom~[\citeyear{Budish2013}]. A far-reaching further generalization was later given by Fujishige, Sano, and Zhan~[\citeyear{Fujishige2018}] that is particularly relevant to our work. In this model, the fractional assignment of items must be in a polymatroid (submodular polytope); the model is formally described in Section~\ref{sec:sub-prel}. The natural generalization of the mechanism in this context is the submodular eating algorithm, described in Algorithm~\ref{alg:submodular-eating}.

The Nash welfare maximization guarantee extends to this general submodular setting.
In fact, Theorem~\ref{theorem:eating-guar} stated earlier arises as a special case. This requires only the natural assumption that a full unit of each item can be consumed under the submodular constraints. 
\begin{theorem}\label{thm:submod-eating-upper} 
Consider the Submodular Randomized Assignment model with the assumption that a full unit of each item can be consumed under the submodular constraints.   
Then, the submodular eating algorithm approximates the maximum Nash welfare within a factor of $\ln(n)+1$. Consequently, it is $(\ln(n)+1)$-approximately Pareto efficient.
\end{theorem}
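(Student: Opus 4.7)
The plan is to compare the PS allocation $\mx$ against the maximum Nash welfare allocation $\my$ by (i) establishing a per-item lower bound $u_i(\mx_i) \ge u_{ij}\,t_j$ in terms of the \emph{inactivation times} $t_j$ of items in the eating algorithm, (ii) controlling $\sum_j Y_j/t_j$ (with $Y_j := \sum_i y_{ij}$) via a capacity bound coming from the submodular polyhedron, and (iii) concluding via the AM--GM inequality applied to the per-agent utility ratios.

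For step (i), let $t_j \in (0,1]$ denote the time at which item $j$ becomes inactive in the submodular eating algorithm. For every $s \in [0,t_j)$ the item $j$ is active, so the favorite active item of agent $i$ at time $s$ has utility at least $u_{ij}$. Hence during $[0,\min(t_j,T_i))$, where $T_i \le 1$ is when $i$ stops eating, agent $i$ consumes utility at rate at least $u_{ij}$. If $T_i < 1$, then $i$ has no active positive-utility item at $T_i$, which forces $t_j \le T_i$ for every $j$ with $u_{ij} > 0$; either way, $u_i(\mx_i) \ge u_{ij}\,t_j$. Dividing and summing,
\[
\sum_i \frac{u_i(\my_i)}{u_i(\mx_i)} \;=\; \sum_{i,j} \frac{u_{ij}\,y_{ij}}{u_i(\mx_i)} \;\le\; \sum_j \frac{Y_j}{t_j}.
\]

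For step (ii), set $S(s) := \{j : t_j \le s\}$. By the uncrossing structure of tight sets in $\fP(\rho)$ (Proposition~\ref{prop:uncrossing}), $S(s)$ is precisely the maximum tight set at time $s$, so $\rho(S(s)) = \sum_{j \in S(s)} x_j(s) \le ns$, because the aggregate consumption at time $s$ is at most $ns$. Feasibility of $\my$ gives $(Y_j)_j \in \fP(\rho)$ and hence
\[
Y_{\le s} \;:=\; \sum_{j \in S(s)} Y_j \;\le\; \rho(S(s)) \;\le\; \min\{ns,\;\rho(\items),\;n\}.
\]
The assumption $\rho(\{j\}) \ge 1$ yields $t_{(1)} \ge 1/n$ for the first inactivation time, so $Y_{\le s} = 0$ for $s < 1/n$. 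Using the identity $\sum_j Y_j/t_j = \int_0^\infty Y_{\le s}/s^2\,ds$, direct integration gives
\[
\sum_j \frac{Y_j}{t_j} \;\le\; n \bigl( \ln(\min\{n, \rho(\items)\}) + 2 \bigr).
\]

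For step (iii), the GM--AM inequality applied to the per-agent ratios $(u_i(\my_i)/u_i(\mx_i))_i$ gives
\[
\frac{\NSW(\my)}{\NSW(\mx)} \;=\; \Bigl(\prod_i \tfrac{u_i(\my_i)}{u_i(\mx_i)}\Bigr)^{1/n} \;\le\; \frac{1}{n}\sum_i \tfrac{u_i(\my_i)}{u_i(\mx_i)} \;\le\; \ln(\min\{n,\rho(\items)\}) + 2,
\]
which is the claimed approximation. The Pareto-efficiency corollary is immediate, since any $\gamma$-Pareto-dominating allocation would also $\gamma$-dominate in Nash welfare.

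The main obstacle is step (ii): identifying $S(s)$ with the maximum tight set and obtaining the sharp bound $\rho(S(s)) \le ns$ in the \emph{general} submodular setting rests on the lattice structure of tight sets via the uncrossing argument, which is the key place where the submodular polyhedron structure must be exploited. Once this is in place, the integral computation and the AM--GM step are routine.
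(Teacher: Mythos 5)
Your proposal is correct and follows the same high-level strategy as the paper: lower-bound each $u_i(\mx_i)$ in terms of the time each item stays available, sum the ratios $u_i(\my_i)/u_i(\mx_i)$ against the feasibility constraint on the optimal allocation, and close with AM--GM. The only real difference is a reformulation of the middle step. The paper assigns each item $j$ the quantity $\theta_j = \rho(T_t)$ (the $\rho$-value of the tight set that kills $j$), proves $u_i(\my_i) \ge u_{ij}\theta_j/n$, groups the sum $\sum_j z_j/\theta_j$ by the discrete breakpoint sets $S_t$, and then invokes a discrete Riemann-sum lemma; you instead work with the continuous inactivation time $t_j$ directly, prove the analogous $u_i(\mx_i)\ge u_{ij}t_j$ (a marginally stronger bound, since $\rho(T_t) \le n t_j$ because some consumption at time $t_j$ may lie outside $T_t$), and compute the resulting quantity $\sum_j Y_j/t_j$ via the layer-cake identity $\int_0^\infty Y_{\le s}/s^2\,ds$ with the envelope $Y_{\le s}\le\min\{ns,\rho(\items),n\}$. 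Both routes rely on exactly the same structural ingredients---the lattice of tight sets, the fact that the aggregate consumption at time $s$ is $\le ns$, and feasibility of the competing allocation---so this is a reformulation rather than a different proof. One minor observation: your integral computation (integrating $\min\{ns,R\}/s^2$ from $1/n$ to $\infty$ with $R=\min\{n,\rho(\items)\}$) actually yields $n(\ln R + 1)$, not $n(\ln R + 2)$; you are giving away a constant relative to what your own argument proves, which is fine for matching the theorem but worth noticing.
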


\subsection{Efficiency guarantees for envy-free mechanisms}
The PS mechanism guarantees envy-freeness in both the ordinal and cardinal models. However, while there may be a logarithmic loss in Pareto efficiency, it cannot exceed this bound, as shown in Theorem~\ref{thm:submod-eating-upper}, even in the more general submodular setting. 

On the other hand, the classical result by \citet{hylland1979efficient} shows the existence of envy-free and Pareto-efficient randomized allocations through a market-based mechanism based on \emph{Fisher markets} with linear utilities and matching constraints. This is done by showing the existence of a \emph{competitive equilibrium} in this market, which was recently shown to be PPAD-hard to compute by Chen, Chen, Peng, and Yannakakis~[\citeyear{chen2022computational}].  
Tr\"obst and Vazirani~[\citeyear{troebst2024cardinal}] further established that finding an envy-free and Pareto-efficient randomized assignment is also PPAD-hard. 

Given this intractability, we can relax the requirement and aim to find an allocation $\mx$ that is $\alpha$-envy-free, i.e., for all agents $i,i'\in\agents$, $u_i(\mx_i)\ge u_i(\mx_{i'})/\alpha$. Tr\"obst and Vazirani~[\citeyear{troebst2024cardinal}] provide a convex program whose optimal solutions are 2-approximately envy-free and Pareto-efficient.
Hence, while achieving both envy-freeness and Pareto-efficiency is PPAD-complete, one can efficiently find a Pareto-efficient allocation that satisfies envy-freeness up to a constant factor. 

This naturally raises the question, posed also in~\cite{troebst2024cardinal}: is the converse tradeoff achievable? That is, can we find in polynomial-time an envy-free allocation that is $\alpha$-approximate Pareto efficiency for some constant $\alpha$? This is particularly relevant in applications where fairness is prioritized over efficiency. Our next result answers this in the affirmative, even in the more general submodular setting. A more formal statement is given in Section~\ref{sec:envy-nash} as Theorem~\ref{thm:envy-main2}.

\begin{theorem}\label{thm:envy-main} 
There exists a polynomial-time algorithm that, for any $\varepsilon >0$, finds an envy-free and $(\ee^{1 / \ee}-\varepsilon)$-Pareto efficient allocation for the random assignment problem and for the more general submodular setting. 
 \end{theorem}

\paragraph{Envy-freeness and approximately Pareto efficiency for general utilities} One may ask the analogous question in the general setting of concave monotone nondecreasing utility functions.
Classical results in mathematical economics \cite{ArrowDebreu1954} imply the existence of a \emph{competitive equilibrium} in Fisher markets with such utilities. Competitive equilibrium allocations are Pareto-efficient, and if all budgets are equal, then they are also envy-free. We note that the model does not cover matching markets where the bundles $\mx_i$ of the agents must satisfy additional constraints $\sum_{j\in \items} x_{ij}\le 1$; the proof of existence for this case is due to \citet{hylland1979efficient}. Later, \citet{garg2022approximating} showed the existence of equilibria in a more general constrained market model. 

However, computing equilibria is PPAD-hard even for simple nonlinear concave utility functions \cite{CT2009,bei2019earning}. In contrast, the maximum Nash welfare allocation can be easily shown to be 2-approximately envy-free even in the general Fisher market; see~\cite{garg2025approximating}. 

Analogous to the random assignment problem, one may ask whether it is possible to efficiently find envy-free but approximately Pareto-efficient allocations in this setting for general utilities. One challenge here is that the approach used for the random assignment problem in Theorem~\ref{thm:envy-main} does not extend to this setting due to the non-convexity of the formulation (see Section~\ref{sec:Fisher} for details). 
In Section~\ref{sec:Fisher}, we address this problem using a different approach, obtaining a weaker approximation guarantee. 

\begin{theorem}\label{thm:Fisher}
There exists a polynomial-time algorithm that for any $\varepsilon>0$, finds a $(1+\varepsilon)$-envy-free and $(2+2\varepsilon)$-approximate maximum Nash welfare allocation for Fisher markets under monotone concave utility functions. Such an allocation is also $(2+2\varepsilon)$-approximately Pareto efficient.
\end{theorem}

\subsection{Efficiency of the PS mechanism for chores}\label{sec:intro-chores}
Fairly and efficiently allocating a set of chores among agents is a fundamental problem in fair division and has attracted substantial recent attention (see, e.g.,~\cite{bogomolnaia2017competitive,branzei2024algorithms,ChaudhuryGMM22,ChaudhuryKMN24,MS25,liu2024mixed} and references therein), though most work ignores matching constraints, which arise naturally in applications; see~\cite{GargTV25}. 

The simultaneous eating algorithm can also be applied in the chores setting, where agents incur \emph{disutilities} rather than utilities from receiving items. In this context, a set $\items$ of $m$ chores needs to be assigned to a set $\agents$ of $n$ agents in a fair and efficient manner such that each agent receives a total of one unit of chores.\footnote{It is natural to assume $m=n$.
Our bounds hold even without this assumption: if $m>n$, some chores remain unassigned; if $m<n$, the problem can be reduced to $m=n$ by introducing dummy chores with arbitrarily small disutility for all agents.\label{footnote2}} 

In the cardinal model, agent $i\in \agents$ has disutility $d_{ij} \ge 0$ over item $j\in\items$. Given a fractional allocation $\mx\in\Rp^{\agents\times \items}$, the expected disutility of agent $i$ is defined as $d_i(\mx_i):= \sum_{j\in\items} d_{ij} x_{ij}$. 
In the ordinal model, we assume that agents report only a strict preference order $\succ_i$ over the items, without quantifying the intensity of their preferences ($j\succ_i j'$ means $d_{ij}\le d_{ij'}$).  

A mechanism is said to be envy-free in the cardinal model if $d_i(\mx_i) \le d_i(\mx_{i'})$ for any $i, i'\in\agents$, i.e., no agent strictly prefers the bundle assigned to another agent in terms of lower expected disutility. In the ordinal model, it is envy-free if for any $i,i'\in\agents$, $\mx_{i'}$ does not stochastically dominate $\mx_{i}$ with respect to $\succ_i$. 
As in the goods setting, the PS mechanism yields an envy-free and ordinally-efficient allocation in the chores setting as well. However, its efficiency in the cardinal utility model remained unknown. 

Garg, Tr\"obst, and Vazirani~[\citeyear{GargTV25}] studied the chores setting under cardinal preferences and showed that the HZ mechanism, originally designed for goods, can be adapted to chores via a utility-shifting transformation. This adaptation preserves both envy-freeness and Pareto efficiency. However, as noted earlier, computing such an allocation is PPAD-hard~\cite{VaziraniY25,troebst2024cardinal}, and thus the mechanism is not computationally efficient. In the goods setting, maximizing Nash welfare yields a 2-approximately envy-free and Pareto efficient allocation \cite{garg2025approximating,troebst2024cardinal}. Unfortunately, this approach does not carry over to the chores setting~\cite{GargTV25}, which is known to be computationally and structurally more challenging, even in the absence of matching constraints. As a result, the problem of finding an allocation that is simultaneously $\alpha(n)$-approximately envy-free and $\alpha(n)$-approximately Pareto efficient for any function $\alpha(n)$ remained open. 

We bound the efficiency of the simultaneous eating algorithm in the cardinal utility model for the chores setting: we show that it guarantees $n$-approximate Pareto efficiency, and this bound is asymptotically tight. Notably, this is the first approximation guarantee for computing an approximately fair and efficient allocation achieved by any algorithm in this setting with any approximation guarantee. 

\begin{theorem}\label{thm:chores-main}
Assume all disutilities are positive: $d_{ij}>0$ for all agents $i$ and chores $j$. Then, 
the simultaneous eating algorithm is $n$-approximately Pareto efficient in the chores setting. Moreover, there exist instances in which the disutility of each agent under its allocation is at least $n/4$ times higher than under some other feasible allocation.  
\end{theorem}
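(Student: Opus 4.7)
The theorem has two parts: (i) an upper bound asserting that the PS allocation is $n$-approximately Pareto efficient, and (ii) a matching lower bound exhibiting instances where some feasible $\my$ Pareto dominates $\mx$ by factor $n/4$. I plan to prove each separately.

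For the upper bound, the central structural observation is that in PS for chores, at any time $t\in[0,1]$ the aggregate consumption across all agents equals exactly $nt$, so at most $\lfloor nt\rfloor$ chores can have been fully exhausted. Since each agent always eats its most preferred active chore, the chore $j_i(t)$ eaten by agent $i$ at time $t$ must have rank at most $\lfloor nt\rfloor$ in $i$'s preference order (otherwise a strictly preferred active chore would still be available). Writing $D_i(r)$ for the disutility of $i$'s $(r+1)$-th most preferred chore, integrating yields
\[
d_i(\mx_i) \;\le\; \int_0^1 D_i(\lfloor nt\rfloor)\,dt \;=\; \frac{1}{n}\sum_{r=0}^{n-1}D_i(r),
\]
the average $\mu_i^{(n)}$ of agent $i$'s top-$n$ disutilities. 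To deduce $n$-approximate Pareto efficiency, I would assume for contradiction that some feasible $\my$ satisfies $d_i(\my_i) \le d_i(\mx_i)/n \le \mu_i^{(n)}/n$ for every agent, with strict inequality somewhere. Since every chore outside $i$'s top-$n$ carries disutility at least $\mu_i^{(n)}$, a Markov-type estimate forces at least a $1-1/n$ fraction of $\my_i$'s mass onto $i$'s top-$n$ chores $T_i$. Combining this per-agent concentration with the per-chore capacities $\sum_i y_{ij}\le 1$ and the fact that PS's own consumption lives on $\bigcup_i T_i$, I would derive infeasibility via LP duality on the bipartite matching polytope or a polymatroidal exchange argument that uses the PS water-levels as dual certificates.

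For the lower bound, I would construct an instance with $n$ agents and $n$ chores in which all agents share a single top preference $c_1$ of disutility $\epsilon \ll 1/n$, each agent $i$ has a distinct ``home'' chore of disutility $\approx 4/n$ as its second preference, and the remaining chores carry disutility $\approx 1$. PS first forces all $n$ agents to consume $c_1$ together, and the resulting cascade of contentions on successively exhausted chores drives every agent's PS disutility to $\Omega(1)$; by contrast, the matching assigning each agent to its home chore yields disutility $\approx 4/n$ per agent, giving the factor $n/4$. The construction must be balanced so that \emph{every} agent (and not just some) attains the ratio $n/4$, which pins down the precise constant.

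The most delicate step I anticipate is the capacity-based infeasibility argument in the upper bound. The aggregate sum bound $\sum_i d_i(\my_i)\le(1/n)\sum_i d_i(\mx_i)$ alone is not contradictory---indeed, the utilitarian LP optimum can be arbitrarily small compared to PS's utilitarian cost---so the proof must genuinely exploit the combinatorial overlap between the top-$n$ sets of different agents and the per-chore capacities, rather than merely summing per-agent inequalities.
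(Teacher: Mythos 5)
Your upper-bound strategy diverges substantially from the paper's, and unfortunately the divergence lands you in a dead end at exactly the step you flagged as ``delicate.'' The bound $d_i(\mx_i^{\mathrm{PS}})\le\mu_i^{(n)}$ (average of $i$'s top $n$ disutilities) is correct, and the Markov estimate does force $\ge 1-1/n$ of $\my_i$'s mass onto $i$'s top-$n$ set $T_i$. But because you fix $|T_i|=n$ for every agent, the capacity argument is vacuous: summing gives $\sum_i\sum_{j\in T_i}y_{ij}\ge n-1$, while the per-chore capacities only say this sum is $\le n$, and $|\bigcup_i T_i|\ge n$ is automatic since each $T_i$ already has $n$ elements. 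There is no tension to exploit, and I do not see how ``LP duality on the bipartite matching polytope'' or a ``polymatroidal exchange'' closes this — you are essentially trying to derive a contradiction from a single aggregate inequality, which you yourself correctly observe cannot work. The paper's proof sidesteps this by defining $T_i$ \emph{adaptively} with respect to the competing allocation $\mx$: $T_i$ is the smallest prefix of $i$'s preference order carrying all but a $1/n$-fraction of $\mx_i$'s mass, so $|T_i|$ can be as small as $1$. Then $\sum_i\sum_{j\in T_i}x_{ij}>n-1$ forces $|\bigcup_i T_i|\ge n$ \emph{nontrivially}, and a short case analysis on whether all of $\bigcup_i T_i$ is exhausted by PS produces a single agent $k$ whose entire PS bundle is supported in $T_k$, whence $d_k(\my_k^{\mathrm{PS}})\le d_{k,k_{t_k}}< n\, d_k(\mx_k)$. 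The adaptive threshold, not a fixed top-$n$ cutoff, is the missing idea.

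Your lower-bound construction as stated also does not produce the $n/4$ ratio. If every agent $i$ has a distinct home chore $h_i$ as its second preference, then after the initial shared consumption of $c_1$ each agent proceeds uncontested to $h_i$; no cascade of exhaustions occurs, and the PS disutility for agent $i$ is roughly $\epsilon/n + (1-1/n)\cdot 4/n \approx 4/n$, matching the home-chore allocation up to lower-order terms, so the ratio is $\approx 1$, not $n/4$. The paper instead partitions the agents into two groups of $n/2$ with \emph{different} sets of ``expensive'' chores: group 1 dislikes only chore $n$, group 2 dislikes chores $n/2,\dots,n$. Under PS every agent eats $1/n$ of every chore, so group 2 incurs disutility $\approx 1/2$; under the alternative allocation group 1 takes the chores that are cheap for them but expensive for group 2, freeing group 2 to incur only $\approx 2/n$. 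The factor $n/4$ then comes from $\frac{1/2}{2/n}$, with group 1's ratio being even larger. The asymmetry between the two groups' disutility profiles — not a single shared top item — is what makes contention hurt \emph{everyone} by the claimed factor.
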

The efficiency guarantee does not hold without the assumption on positive disutilities. As illustrated by Example~\ref{ex:zero-chore}, there can be instances with an optimal allocation giving 0 disutility to all agents while the eating algorithm gives some agents positive disutility. 

\subsection{Further related literature}
Another popular mechanism in the ordinal setting, beyond probabilistic serial (PS), is Random Priority (RP)~\cite{Moulin2018fair}. RP is strategyproof but does not satisfy envy-freeness, whereas PS guarantees envy-freeness but is not strategyproof. Despite these differences, the two mechanisms are closely related in certain settings. 

\citet{che2010asymptotic} showed that when there are multiple copies of each item, the outcomes of RP and PS converge as the number of copies tends to infinity. ~\citet{kojima2010incentives} further showed with sufficiently many copies of each item, truthfully reporting ordinal preferences becomes a weakly dominant strategy for an agent with a given expected utility function under the PS mechanism. More recently, Wang, Wei, and Zhang~[\citeyear{wang2020bounded}] analyzed the incentive ratio of the eating algorithm and showed that it is bounded above by $1.5$, quantifying the maximum benefit an agent can gain by misreporting their preferences.

In the cardinal setting, the classical solution concept of Competitive Equilibrium with Equal Incomes (CEEI) provides allocations that are both envy-free and Pareto optimal~\cite{VARIAN74}. The HZ mechanism builds on CEEI by incorporating matching constraints and has been proposed for a variety of applications, including course allocation and other assignment problems~\cite{Budish2011combinatorial,He2018pseudo}. Furthermore, CEEI and its extensions have been extensively studied in the context of both goods and chores, particularly in settings without matching constraints (see recent work~\cite{bogomolnaia2017competitive,ChaudhuryGMM22,ChaudhuryKMN24} and the references therein). 

Another prominent approach in the literature is the maximum Nash welfare (MNW) objective, which has received significant attention in fair division and social choice; see, e.g.,~\cite{Moulin03}. For indivisible goods with additive valuations, MNW allocations give the ``unreasonable'' fairness guarantee of envy-freeness up to one good (EF1)~\cite{CaragiannisKMPS19}. 

There is a strong connection between CEEI and maximum Nash welfare allocations: they coincide when agents have homogeneous concave utility functions. The paper \cite{garg2025approximating} analyzed the relationship of these outcomes, showing that any competitive equilibrium $\ee^{1/\ee}$-approximates MNW, and conversely, for the broad class of \emph{Gale substitute} utilities, the MNW allocation gives each agent at least half of their maximum possible utility in any equilibrium. 

Due to the computational intractability of computing HZ allocations, a line of work has explored Nash-bargaining-based solutions, which offer efficient computation while retaining desirable structural and fairness properties of equilibrium-based approaches (see e.g.,~\cite{GargTV23,HosseiniV22,panageas2024time}).

\section{Efficiency of the PS mechanism}
\subsection{Model and preliminaries}\label{sec:prelim}
We consider the random assignment problem, where $\agents$ denotes a set of $n$ agents and $\items$ a set of $m$ items. In the ordinal model, each agent provides a strict preference ordering $(\succ_i)_{i\in\agents}$ over the items, without quantifying the intensity of their preferences. In contrast, in the cardinal model, each agent $i\in\agents$ specifies a non-negative utility $u_{ij}\ge 0$ for each item $j\in\items$. The cardinal preferences $(u_{ij})_{j\in\items}$ are \emph{consistent} with the ordinal preference ordering $\succ_i$, if  $j\succ_i j'$ implies $u_{ij}\ge u_{ij'}$. 

A (randomized) allocation is represented by $\mx\in\Rp^{\agents\times \items}$, where $x_{ij}$ denotes the probability of agent $i$ receiving item $j$; we write $\mx_i=(x_{ij})_{j\in\items}$ for the allocation of agent $i$. We require that $\sum_{i\in\agents} x_{ij}\le 1$ for all $j\in\items$, and $\sum_{j\in\items} x_{ij}\le 1$ for all $i\in\agents$,
i.e., $\mx$ corresponds to a (partial) matching between agents and items. 

A mechanism is said to be \emph{envy-free} in the cardinal model if $u_i(\mx_i)\ge u_i(\mx_{i'})$ for any $i,i'\in \agents$. In the ordinal model, envy-freeness is defined via stochastic dominance: an allocation $\mx_{i'}$ stochastically dominates $\mx_i$ with respect to $\succ_i$, if, for every item $j\in \items$, $\sum_{k\succeq j} x_{i'k} \ge \sum_{k\succeq j} x_{ik}$, with strict inequality for some $j$. A mechanism is \emph{envy-free} in the ordinal model if no such domination occurs, which in particular implies envy-freeness for all cardinal utilities consistent with $\succ_i$.

The simultaneous eating algorithm (i.e., the PS mechanism) outputs an envy-free randomized allocation and is \emph{ordinally-efficient}, that is, no other randomized assignment stochastically dominates it for all agents. 

In the cardinal model, efficiency is captured by Pareto dominance: an allocation $\mx'$ Pareto dominates $\mx$ if $\sum_j u_{ij}x'_{ij} \ge \sum_j u_{ij}x_{ij}, \forall i\in\agents$, with strict inequality for at least one agent. A mechanism is Pareto-efficient in the cardinal model if no such domination occurs. The PS mechanism is not Pareto-efficient in the cardinal model, as shown in the simple example below.

\begin{example}\label{example1}
Consider a scenario with three agents and three items $\{a, b, c\}$. All agents share the same ordinal preferences, ranking the items $c \succ b \succ a$. The specific utilities for each agent-item pair are detailed in Table~\ref{tab:agent-preferences}.

Under the PS mechanism, each agent receives a $1/3$ share of every item. This yields a utility of $1.7$ for agents $1$ and $2$, and $2.3$ for agent $3$.

In contrast, consider an alternative envy-free allocation where agents $1$ and $2$ equally divide items $a$ and $c$, and agent $3$ is allocated the entirety of item $b$. Under this alternative, agents $1$ and $2$ each receive a utility of $2$, while agent $3$ receives $2.9$. Therefore, this alternative envy-free assignment Pareto-dominates the PS allocation by providing a strict utility improvement for all agents.
\begin{table}[t]
\centering
\begin{tabular}{@{}lccc@{}}
\toprule
\textbf{Agent} & \textbf{Item $a$} & \textbf{Item $b$} & \textbf{Item $c$} \\ \midrule
Agent $1$              & 1            & 1.1                      & 3                       \\
Agent $2$              & 1                       & 1.1            & 3                      \\
Agent $3$              & 1            & 2.9                       & 3                      \\ \bottomrule
\end{tabular}
\caption{Utility profiles for agents 1--3 across three items.}
\label{tab:agent-preferences}
\end{table}
\end{example}

To quantify this loss, \citet{Immorlica2017} introduced the notion of \emph{$\gamma$-approximate Pareto efficient} allocations: an allocation $\mx$ is $\gamma$-approximate Pareto efficient if no allocation $\my$ satisfies $u_i(\my_i)\ge \gamma u_i(\mx_i)$ for all $i\in\agents$, with strict inequality for at least one agent. %and constructed an example demonstrating a logarithmic loss in Pareto efficiency of the PS mechanism in the cardinal model. 

The classical \emph{Nash welfare} function, defined for a randomized assignment $\mx\in\Rp^{\agents\times \items}$ as the geometric mean of agents' utilities:
\begin{equation}\label{eq:nsw}
   \NSW(\mx)= \prod_{i\in\agents} u_i(\mx_i)^{1/n}\, .
\end{equation}
The \emph{maximum Nash welfare} assignment is the random assignment $\my$ that maximizes $\NSW(\my)$. We say that an allocation $\mx$ \emph{$\gamma$-approximates} the maximum Nash welfare if $\NSW(\mx)\ge \NSW(\my)/\gamma$ for any random assignment $\my$. It is immediate that any $\gamma$-approximate maximum Nash welfare allocation is also $\gamma$-approximately Pareto efficient.

\subsubsection{Submodular randomized assignment problem}\label{sec:sub-prel}
A set function  $\rho\,:\,2^\items\to \Rp$ is \emph{submodular} if
$\rho(X)+\rho(Y)\ge \rho(X\cap Y)+\rho(X\cup Y)$ for all $ X,Y\subseteq \items$.
We also assume that $\rho$ is monotone, i.e., $\rho(X)\ge\rho(Y)$ if $X\supseteq Y$, and that $\rho(\emptyset)=0$.
The \emph{polymatroid associated with $\rho$} is defined as  
\begin{equation}\label{eq:submodu-polytope}
\fP(\rho)=\big\{\mz\in \R^\items\, :\, \mz\ge 0\, , \, \sum_{s\in S}z_s\le \rho(S)\quad\forall S\subseteq\items\big\}\, .
\end{equation}
Vectors $\mz\in\fP(\rho)$ are called \emph{independent}; a set $S\subseteq\items$ is independent if its incidence vector satisfies $\chi_S\in \fP(\rho)$.

In the \emph{Submodular Randomized Assignment Model}, a feasible assignment is a vector 
$\mx\in \Rp^{\agents\times \items}$ such that
\begin{equation}\label{eq:submod-allocation}
\sum_{j\in \items} x_{ij} \le 1\, ,\quad\forall i\in\agents\, ,\quad\mbox{and}\quad 
\sum_{i\in\agents} \mx_i\in \fP(\rho)\, .
\end{equation}
That is, we require that the aggregate consumption vector 
$\sum_{i\in\agents} \mx_i$ is independent. 
The standard assignment model corresponds to the function $\rho(S)=|S|$. A more general case, where each item $j$ is available in $\supply_j$ copies, can be modeled by $\rho(S)=\sum_{j\in S}\supply_j$. 

We call a mapping $\pi\,:\, \agents\to\items\cup\{\emptyset\}$ an \emph{independent assignment}, if every agent receives at most one item (or none), and the set of assigned items, with multiplicities, forms an independent set. 
The following proposition,  a generalization of the Birkhoff--von Neumann theorem asserts that a fractional allocation as in \eqref{eq:submod-allocation} can be decomposed into a lottery over independent assignments. It is a consequence of the integrality of the submodular intersection polyhedron, see e.g., \cite[Theorem 46.1]{schrijver}.

\begin{proposition}\label{prop:lottery}
Let $\rho\,:\,2^\items\to \Zp$ be an integer valued monotone submodular function, and let $\mx\in \Rp^{\agents\times \items}$ satisfy \eqref{eq:submod-allocation}. Then, there exists independent assignments $\pi^{(t)}$ and probabilities $p_t$,  $t=1,\ldots,k$, such that selecting the assignment  $\pi^{(t)}$ with probability $p_t$ will give item $j$ to agent $i$ with probability $x_{ij}$ for all $i\in\agents$, $j\in\items$.
\end{proposition}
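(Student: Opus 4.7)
The plan is to exhibit the feasibility polytope for \eqref{eq:submod-allocation} as an integer polytope whose vertices are exactly the incidence vectors of independent assignments, and then to recover the lottery by Carath\'eodory's theorem.

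First, I would write the feasibility region as
\[
P = \left\{\mx \in \Rp^{\agents\times \items}\,:\, \sum_{j\in\items} x_{ij} \le 1\ \ \forall i\in\agents,\ \sum_{i\in\agents} \mx_i \in \fP(\rho)\right\}.
\]
The per-agent constraints $\sum_j x_{ij}\le 1$ describe a partition matroid polytope on the ground set $\agents\times\items$. The aggregate constraint $\sum_i \mx_i\in \fP(\rho)$, using monotonicity of $\rho$, is equivalent to the system
\[
\sum_{(i,j)\in F} x_{ij} \le \rho(\pi_2(F)) \quad \forall F\subseteq \agents\times\items,
\]
where $\pi_2\,:\,\agents\times\items\to\items$ is the coordinate projection; the induced set function $F\mapsto\rho(\pi_2(F))$ is again monotone submodular, so it defines a polymatroid on $\agents\times\items$.

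Since $\rho$ is integer valued, $P$ is the intersection of two integer polymatroids and is therefore an integer polytope by Edmonds' polymatroid intersection theorem (as reflected in \cite[Theorem~46.1]{schrijver}). Each integer vertex $\mx^\star\in\{0,1\}^{\agents\times\items}$ of $P$ gives an independent assignment $\pi$: agent $i$ is mapped to the unique $j$ with $x^\star_{ij}=1$, or to $\emptyset$ if $\mx^\star_i=\0$; the aggregated image lies in $\fP(\rho)\cap\Zp^\items$ by construction. Writing $\mx\in P$ as a convex combination of such vertices and interpreting the coefficients as probabilities $p_t$ yields the desired lottery, since summing the vertex incidences weighted by $p_t$ recovers $x_{ij}$.

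The main point requiring care is matching the description of $P$ to the integrality framework: one must verify that $F\mapsto \rho(\pi_2(F))$ inherits monotonicity and submodularity from $\rho$, and that the polymatroid-intersection inequalities above cut out exactly $P$ (neither more nor less). Once this reformulation is in place, the rest is a direct application of the cited integrality theorem combined with Carath\'eodory's theorem.
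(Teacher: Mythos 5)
Your proof is correct and follows essentially the same route as the paper, which simply cites the integrality of the submodular (polymatroid) intersection polyhedron via Schrijver's Theorem 46.1 without spelling out the reduction. You supply the details the paper leaves implicit---the equivalence of the aggregate constraint with the polymatroid system $\sum_{(i,j)\in F} x_{ij}\le\rho(\pi_2(F))$ and the verification that $F\mapsto\rho(\pi_2(F))$ is monotone submodular---and then conclude by Carath\'eodory, exactly as intended.
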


\paragraph{The submodular eating algorithm}
 Consider the ordinal preference model where each agent has a strict preference order $\succ_i$ over $\items$. The simultaneous eating algorithm naturally extends to this setting as follows.

Let  $\mx\in \Rp^{\agents\times \items}$ denote the current (partial) allocation, initialized as $\mx=\0$. Throughout, we maintain that $\mx$ satisfies the feasibility constraint~\eqref{eq:submod-allocation}.
An item $j\in \items$ is called \emph{active with respect to $\mx$}, if there exists $\varepsilon>0$ such that $\sum_{i\in \agents}\mx_i+\varepsilon \cdot \chi_{\{j\}}\in \fP(\rho)$, where $\chi_{\{j\}}$ is the incidence vector of item $j$.  
To characterize the set of active items, let us introduce the concept of tight sets.
For $\mz\in\fP(\rho)$, we say that $S\subseteq \items$ is \emph{tight} w.r.t. $\mz$, if $\rho(S)=\sum_{s \in S} z_s$. The following fundamental property can be derived by the standard uncrossing argument.
\begin{proposition}\label{prop:uncrossing}
For any $\mz\in \fP(\rho)$, and tight sets $S,T\subseteq \items$, $S\cap T$ and $S\cup T$ are also tight. Consequently, there is a unique largest tight set $T_{\mz}$.
\end{proposition}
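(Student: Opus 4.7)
The plan is to apply the standard uncrossing argument. Given tight sets $S,T\subseteq\items$ with respect to $\mz\in\fP(\rho)$, I will chain four (in)equalities: submodularity of $\rho$, tightness of $S$ and $T$, the modularity identity $\mz(S)+\mz(T)=\mz(S\cap T)+\mz(S\cup T)$ (which holds since $\mz$ is a vector, hence $\mz(\cdot)$ is additive on indicator vectors), and the two upper bounds $\mz(S\cap T)\le \rho(S\cap T)$, $\mz(S\cup T)\le\rho(S\cup T)$ coming from $\mz\in\fP(\rho)$.

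Concretely, I would write
\[
\rho(S\cap T)+\rho(S\cup T)\;\le\;\rho(S)+\rho(T)\;=\;\mz(S)+\mz(T)\;=\;\mz(S\cap T)+\mz(S\cup T)\;\le\;\rho(S\cap T)+\rho(S\cup T),
\]
where the first inequality is submodularity, the first equality is tightness of $S$ and $T$, the second equality is additivity of $\mz(\cdot)$, and the last inequality uses $\mz\in\fP(\rho)$. Since the chain starts and ends at the same quantity, every inequality must be an equality. In particular $\mz(S\cap T)=\rho(S\cap T)$ and $\mz(S\cup T)=\rho(S\cup T)$, i.e., both $S\cap T$ and $S\cup T$ are tight.

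For the second assertion, the empty set is trivially tight, so the collection of tight sets is nonempty and closed under unions by the first part. Taking $T_{\mz}$ to be the union of all tight sets (a finite union since $\items$ is finite) yields a tight set, and by construction it contains every tight set, so it is the unique largest tight set.

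I do not anticipate any serious obstacle; the only subtlety is that the inequality $\mz(S\cup T)\le\rho(S\cup T)$ is the defining inequality of $\fP(\rho)$ and requires no sign assumption on $\mz$, so the argument goes through even though $\mz$ is not assumed to be nonnegative. Thus I would present the proof as the displayed chain above followed by the one-line union argument.
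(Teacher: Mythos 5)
Your proposal is correct and is exactly the ``standard uncrossing argument'' the paper gestures at without spelling out: the four-step chain (submodularity, tightness, modularity of $\mz(\cdot)$, membership in $\fP(\rho)$) collapses to equalities, giving tightness of $S\cap T$ and $S\cup T$, and closure under unions then gives the unique maximal tight set. The only tiny point worth being explicit about is that ``the empty set is trivially tight'' relies on the convention $\rho(\emptyset)=0$, which the paper does not state but which is standard for submodular set functions and implicit in the cited literature; with that convention, the family of tight sets is nonempty and your union argument is complete.
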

The following corollary is immediate.
\begin{corollary}\label{cor:tight-active}
Let
$\mx\in \Rp^{\agents\times \items}$ such that $\sum_{i\in\agents} \mx_i\in \fP(\rho)$.
An item $j$ is active w.r.t. $\mx$ if and only if $j\in\items\setminus T_{\mz}$ for $\mz=\sum_{i\in\agents} \mx_i$.
\end{corollary}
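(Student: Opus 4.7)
The plan is to prove both implications of the corollary directly from the definitions and from Proposition~\ref{prop:uncrossing}. Set $\mz = \sum_{i\in\agents}\mx_i$ once and for all, so that activity of $j$ just means: there exists $\varepsilon>0$ with $\mz + \varepsilon \chi_j \in \fP(\rho)$. The key auxiliary observation, which I would state as a quick preliminary, is that every tight set is contained in $T_\mz$. This is a direct consequence of Proposition~\ref{prop:uncrossing}: the union of all tight sets is tight by repeated application of the closure under union, and it is maximal by construction, so it equals $T_\mz$.

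For the forward direction, I would assume $j\in T_\mz$ and derive that $j$ cannot be active. Indeed, $\mz(T_\mz) = \rho(T_\mz)$ by tightness, so for every $\varepsilon>0$,
\[
(\mz + \varepsilon\chi_j)(T_\mz) \;=\; \rho(T_\mz) + \varepsilon \;>\; \rho(T_\mz),
\]
which violates the defining inequality of $\fP(\rho)$ for the set $T_\mz$. Hence $j$ is not active.

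For the converse, assume $j\notin T_\mz$. By the preliminary observation, no tight set contains $j$, so $\rho(S) - \mz(S) > 0$ for every $S\subseteq\items$ with $j\in S$. Since $\items$ is finite, I can set
\[
\varepsilon \;\defeq\; \min\bigl\{\rho(S) - \mz(S) \;:\; j\in S\subseteq\items\bigr\} \;>\; 0.
\]
Then for any $S\subseteq\items$, either $j\notin S$, in which case $(\mz+\varepsilon\chi_j)(S) = \mz(S)\le \rho(S)$, or $j\in S$, in which case $(\mz+\varepsilon\chi_j)(S) = \mz(S)+\varepsilon \le \rho(S)$ by the choice of $\varepsilon$. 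Thus $\mz+\varepsilon\chi_j\in\fP(\rho)$ and $j$ is active.

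There is no real obstacle here; the only subtle point is making sure that ``unique largest tight set'' is interpreted as the inclusion-maximal tight set that contains all other tight sets, which is exactly what Proposition~\ref{prop:uncrossing} provides via closure under unions. Everything else is a one-line verification against the submodular polyhedron constraints.
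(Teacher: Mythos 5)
Your proof is correct and is exactly the natural filling-in of what the paper labels ``immediate'': one direction is the violation of the tight constraint $T_{\mz}$, the other uses that $T_{\mz}$ is the union of all tight sets (Proposition~\ref{prop:uncrossing}) together with finiteness of $\items$ to extract a positive $\varepsilon$.
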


In each round, $\items\setminus T_{\mz}$ is the set of active items with respect to the aggregate consumption $\mz=\sum_{i\in\agents} \mx_i$. 
 Every agent chooses their most preferred active item $j$. Let $\gamma_j$ denote the number of agents who have $j$ as their favorite. These agents start eating these items at the same rate up to a quantity $\alpha=\max\left\{\alpha>0\, : \, \sum_{i\in\agents} \mx_i+\alpha \gamma\in\fP(\rho)\right\}$. At this point, a new tight item with respect to the increased aggregate consumption must appear. This process is repeated until every agent has consumed a full unit or no active items remain. 

A formal description of the algorithm is given in 
Algorithm~\ref{alg:submodular-eating}; this corresponds to the Extended PS algorithm in \cite{Fujishige2018} for the case when all agents have the same demand $1$.

\begin{algorithm}[htb!]
\caption{Submodular Eating Algorithm }\label{alg:submodular-eating}
\KwIn{A submodular randomized assignment instance with agents $\agents$, items $\items$, preferences $(\succ_i)_{i\in \agents}$, and value oracle for the submodular function $\rho\, :\, 2^\items\to\Rp$.}
\KwOut{Fractional allocation $\mx\in\Rp^{\agents\times\items}$.}

Initialize $\mx \gets 0$  and $\mz\gets\0$ \;
\While{$\sum_j z_j<n$ \textbf{and} $T_{\mz} \subsetneq \items$,}{
    \lForEach{$i \in \agents$}{
       $j^*_i\gets$ agent $i$'s  most preferred active item in $ \items\setminus T_{\mz}$
    }
    \ForEach{$j \in \items$}{
        $\gamma_j \gets|\{i\in\agents ~|~ j^*_i = j\}|$\tcp*[r]{number of agents who selected item $j$}
    }
    
    Compute $\alpha \gets \max\{\alpha > 0 : \mz + \alpha \gamma \in \fP(\rho)\}$\;
    
    \lForEach{$i \in \agents$}{
        $\mx_i \gets \mx_i+\alpha \chi_{j(i)}$
    }
    Update $\mz \gets \sum_{i \in \agents} \mx_i$ and $T_{\mz}$ accordingly\;
}
\Return{$\mx$}
\end{algorithm}

An efficient implementation of the algorithm needs to maintain the set $T_{\mz}$, and compute the increment step $\alpha>0$. These can all be done efficiently in the standard value oracle model for $\rho$, see \cite{Fujishige2018}. In particular, computing $T_{\mz}$ corresponds to computing the unique maximal minimizer of the submodular function $\rho(S)-\sum_{j\in S}z_j$.
For the output, the following property follows by noting that there are no more active items if and only if the set $\items$ is tight with respect to $\sum_{i\in\agents} \mx_i$.

\begin{proposition}
At the end of the algorithm, the total amount of items fractionally assigned is $\sum_{i\in\agents, j\in \items} x_{ij}=\min\{n,\rho(\items)\}$. 
\end{proposition}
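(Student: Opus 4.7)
The plan is to exploit the symmetry that all agents eat simultaneously at the same rate, and then analyze the two exit conditions of the while loop separately. I would first set up notation by writing $\alpha_\ell$ for the step size used in iteration $\ell$, and letting $t_k := \sum_{\ell\le k}\alpha_\ell$ denote the cumulative ``eating time'' after iteration $k$. In each iteration, agent $i$'s allocation is incremented by $\alpha\chi_{j(i)}$, so $\sum_{j\in\items}x_{ij}$ grows by exactly $\alpha$ for every agent. By a one-line induction, after iteration $k$ every agent has $\sum_{j}x_{ij}=t_k$, and consequently $\sum_{j}z_j = \sum_{i\in\agents}\sum_{j}x_{ij}= n\,t_k$. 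This invariant is the engine of the proof.

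Next I would argue finite termination: in each iteration the increment $\alpha$ is the largest value keeping $\mz+\alpha\gamma\in\fP(\rho)$, so by Proposition~\ref{prop:uncrossing} / Corollary~\ref{cor:tight-active}, at least one additional item enters the unique maximal tight set $T_{\mz}$ (or the process terminates because $t_k$ reaches $1$). Hence the loop runs for at most $|\items|$ rounds, so it terminates at some final time $t^*$ with corresponding $\mz^*\in\fP(\rho)$.

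Then I would split on which exit condition triggers. If the loop exits because $\sum_j z_j^* = n$, the invariant gives $t^* = 1$ and total allocation $n$; moreover $\mz^*\in\fP(\rho)$ applied to $S=\items$ gives $\rho(\items)\ge \sum_j z_j^* = n$, so $\min\{n,\rho(\items)\}=n$. If instead the loop exits because $T_{\mz^*}=\items$, tightness of $\items$ gives $\sum_j z_j^* = \rho(\items)$, while the invariant $\sum_j z_j^*=nt^*\le n$ yields $\rho(\items)\le n$, so $\min\{n,\rho(\items)\}=\rho(\items)$. In either case the total fractional amount assigned, $\sum_{i,j}x_{ij}=\sum_j z_j^*$, equals $\min\{n,\rho(\items)\}$.

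The only delicate point—which I would clarify as a remark rather than a true obstacle—is to ensure that the step $\alpha$ chosen by the rule $\max\{\alpha>0:\mz+\alpha\gamma\in\fP(\rho)\}$ does not overshoot the cap $t^*\le 1$ imposed by the while-condition $\sum_j z_j<n$; this is handled by interpreting the algorithm so that $\alpha$ is additionally capped at $1-t_{k-1}$ when necessary, which is consistent with the informal description (``until every agent has consumed a full unit or no active items remain''). With that convention in place, the two-case analysis above is immediate, and no heavier machinery is required.
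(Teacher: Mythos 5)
Your proof is correct and takes essentially the same approach the paper leaves implicit in its one-line remark preceding the proposition: termination occurs either because $\sum_j z_j$ reaches $n$ or because $\items$ becomes tight, and feasibility $\mz\in\fP(\rho)$ together with the invariant $\sum_j z_j \le n$ forces the terminating value to equal $\min\{n,\rho(\items)\}$ in each case. Your observation that the literal pseudocode should cap $\alpha$ at $1-t_{k-1}$ is a fair clarification of the informal description (``until every agent has consumed a full unit'') rather than a gap in the argument.
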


%!TEX root=main.tex
\subsection{Random assignment problem}
We first consider the random assignment problem, where each item has a unit supply.
We prove the following, slightly tighter version of Theorem~\ref{theorem:eating-guar}. Whereas Theorem~\ref{theorem:eating-guar}  also follows from the more general submodular result Theorem~\ref{thm:submod-eating-upper} proved in Section~\ref{sec:submodular-upper-proof}, we include the simpler proof for convenience.
\begin{theorem}
    The PS mechanism approximates the maximum Nash Welfare within a factor $H_n$. Consequently, it is $H_n$-approximately Pareto efficient, where $H_n = \sum_{k=1}^n \frac{1}{k}$ denotes the $n$-th harmonic number, satisfying $\ln (n) \le H_n \le \ln (n) + 1$.
\end{theorem}
\begin{proof} 
Let $\mathbf{x} = \{ \mathbf{x}_i \}_{i \in \agents}$ be an arbitrary matching allocation and $\mathbf{y} = \{ \mathbf{y}_i \}_{i \in \agents}$ be the allocation produced by the PS mechanism. Let $\items_{c}$ be the set of items that are completely consumed during the execution of the PS mechanism, and let $e_t$ denote the $t$-th good to be fully consumed.  

In the PS mechanism, agents consume their most preferred available goods at a constant rate. Since at most $n$ agents can consume goods simultaneously, the $t$-th good $e_t$ cannot be fully exhausted before time $t / n$. It follows that for every agent $i$:
$$u_i(\mathbf{y}_i) \geq  \begin{cases}
    u_{i e_t} \cdot t / n & \text{for all }e_t \in \items_{c} \\
    u_{i j}  & \text{for all } j \notin \items_{c}
\end{cases}.$$

Rearranging these inequalities to bound the utility contribution of each good $j$ in the arbitrary allocation $\mathbf{x}$, we obtain
\begin{align*}
    u_i(\mathbf{x}_i) = \sum_j u_{i j} x_{i j} \leq u_i(\mathbf{y}_i) \left[ \sum_{e_t \in \items_{c}} \frac{n}{t} x_{i e_t} + \sum_{j \notin \items_{c}} x_{i j} \right].
\end{align*}

Dividing by $u_i(\mathbf{y}_i)$ and summing over all agents yields
\begin{align*}
    \sum_{i \in \agents} \frac{u_i(\mx_i)}{u_i(\my_i)} &\leq \sum_{i \in \agents}  \left[ n \sum_{e_t \in  \items_{c}}\frac{x_{i e_t}}{t} + \sum_{j \notin \items_{c}} x_{ij}\right]\\ & \leq n \left(1 + \frac{1}{2} + \cdots + \frac{1}{|\items_{c}|} \right) + (n - |\items_{c}|)\\ & \leq n \cdot H_n\ .
\end{align*}
The second inequality holds as $\sum_i x_{i e_t} \leq 1$ for all $e_t \in \items_{c}$ and $\sum_{i \in \agents} \sum_{j \notin \items_{c}} x_{ij} \leq n - |\items_{c}|$.

To bound the Nash Welfare, we apply AM-GM inequality:
\begin{align*}
    \left[\prod_{i \in \agents} \frac{u_i(\mx_i)}{u_i(\my_i)}\right]^{1 / n} \leq  \frac{1}{n} \sum_{i \in \agents} \frac{u_i(\mx_i)}{u_i(\my_i)} \leq H_n.
\end{align*}
The result follows.
\end{proof}

\subsection{Submodular randomized assignment problem}\label{sec:submodular-upper-proof}
In this section, we prove Theorem~\ref{thm:submod-eating-upper}, which also includes Theorem~\ref{theorem:eating-guar} as a special case.
 We will use the next simple lemma that follows by noting that sum on the left hand side gives a lower bound to the integral
$\int_{p_1}^{p_t} \frac{1}{x}dx$. 
\begin{lemma}\label{lem:ratio-bound} Let $0<p_1<p_2<\ldots<p_k$. Then,
$\sum_{t=2}^{k} (p_t-p_{t-1})/{p_t}\le \ln\left({p_k}/{p_1}\right)$.
\end{lemma}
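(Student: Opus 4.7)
The plan is to prove the stated inequality by comparing the sum on the left-hand side to the integral $\int_{p_1}^{p_k} \frac{1}{x}\,dx$, as the remark preceding the lemma already suggests. The idea is that the $t$-th summand $(p_t-p_{t-1})/p_t$ is exactly the area of a rectangle of width $p_t-p_{t-1}$ and height $1/p_t$, which is dominated by the area under the curve $x\mapsto 1/x$ on the interval $[p_{t-1},p_t]$.

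Concretely, I would first observe that since $1/x$ is a decreasing function on $\Rpp$ and $p_{t-1}<p_t$, for every $x\in[p_{t-1},p_t]$ we have $1/x\ge 1/p_t$. Integrating this pointwise inequality over $[p_{t-1},p_t]$ yields
\begin{equation*}
\frac{p_t-p_{t-1}}{p_t}=\int_{p_{t-1}}^{p_t}\frac{1}{p_t}\,dx\le \int_{p_{t-1}}^{p_t}\frac{1}{x}\,dx.
\end{equation*}
Next, I would sum this inequality over $t=2,\ldots,k$. Because the integration intervals $[p_{t-1},p_t]$ tile $[p_1,p_k]$ with no overlap, the right-hand side telescopes into a single integral, giving
\begin{equation*}
\sum_{t=2}^{k}\frac{p_t-p_{t-1}}{p_t}\le \int_{p_1}^{p_k}\frac{1}{x}\,dx=\log(p_k)-\log(p_1)=\log\!\left(\frac{p_k}{p_1}\right),
\end{equation*}
which is exactly the statement of the lemma (recall that the paper's macro redefines $\log$ as the natural logarithm). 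There is essentially no obstacle: the argument is a one-line Riemann-sum comparison, and the only thing to be careful about is that we use the lower Riemann sum (with height $1/p_t$ on the interval $[p_{t-1},p_t]$), which correctly produces an upper bound on each summand, not the other way around.
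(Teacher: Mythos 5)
Your proof is correct and is exactly the argument the paper intends: the left-hand sum is a lower Riemann sum for $\int_{p_1}^{p_k}\frac{dx}{x}$ (since $1/x\ge 1/p_t$ on $[p_{t-1},p_t]$), and summing over the disjoint intervals gives $\ln(p_k/p_1)$. No gaps; this matches the paper's one-line justification preceding the lemma.
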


\begin{proof}[Proof of Theorem~\ref{thm:submod-eating-upper}]
Let $\fP=\fP(\rho)$ be the submodular polyhedron.
Let $\mx\in \Rp^{\agents\times \items}$ be any feasible allocation and $\my\in \Rp^{\agents\times \items}$ the allocation generated by the eating algorithm. 
Let $T_1\subseteq T_2\subseteq \ldots\subseteq T_k\subseteq \items$ denote the sequence of tight sets at the end of each iteration of  Algorithm~\ref{alg:submodular-eating}. Thus, $y(T_t)=\rho(T_t)$ for each $t\in[k]$. Let $S_t\coloneqq T_t\setminus T_{t-1}$ be the items that are active in iteration $t-1$ but not in iteration $t$. 
For every item $j\in\items$, let us define
\[
\theta_j\coloneqq\begin{cases}
\rho(T_t)&\mbox{if }j\in S_t\, ,\\
\min\{n, \rho(\mathcal{E})\}&\mbox{if } j\in \items\setminus T_k\, .
\end{cases}
\]
\begin{claim}
    For all agents $i\in\agents$ and items $j\in J$,
    $u_i(\my_i) \geq u_{ij} \cdot{\theta_j}/n$
\end{claim}
\begin{proof}
Assume 
 $j\in \items\setminus T_k$. Then, item $j$ was available throughout the algorithm. Hence, agent $i$ was always able to choose items of utility at least $u_{ij}$. Since $T_k\subsetneq \items$, all agents consumed a full unit, and therefore $ u_i(\my_i) \geq u_{ij}$ as claimed. Assume next
$j\in S_t$ for some $t$. Then, item $j$ was available until $T_t$ became tight. Up to this point, the total consumption was $\rho(T_t)=\theta_j$. Since everyone is eating at equal speeds, everyone received at least $\theta_j/n$ fractional units in total. For agent $i$, all these items had value at least $u_{ij}$, proving the claim.
\end{proof}
By the above claim, ${n}/{\theta_j}\ge {u_{ij}}/{u_i(\my_i)}$ for all $i\in\agents$ and $j\in\items$. From this, we get
\begin{align*}
     n  \sum_{j \in\items}\frac{x_{ij}}{\theta_j}\geq \frac{\sum_{j\in \items} u_{ij} x_{ij} }{u_i(\my_i)}=\frac{u_i(\mx_i)}{u_i(\my_i)}\, ,\quad\forall i\in \agents\, .
\end{align*}
Let $\mz\coloneqq \sum_{i\in\agents}\mx_i\in \fP$
denote the total consumption in the allocation $\mx$. Summing up over all agents, 
\[
    \frac{1}{n}\sum_{i\in\agents} \frac{u_i(\mx_i)}{u_i(\my_i)}\le \sum_{j\in \items} \frac{z_j}{\theta_j}\, .
\]
Let us denote
\[
\gamma_t\coloneqq\begin{cases}
\sum_{j\in S_t} z_{j}&\mbox{if } t\in [k]\, ,\\
\sum_{j\in \items\setminus T_k} z_{j}&\mbox{if } t=k+1\, .
\end{cases}
\]
With this notation,
\begin{equation}\label{eq:v-t-upper}
   \frac{1}{n}\sum_{i\in\agents} \frac{u_i(\mx_i)}{u_i(\my_i)} \le \frac{\gamma_{k+1}}{\min\{n, \rho(\mathcal{E})\}}+  \sum_{t=1}^{k} \frac{\gamma_t}{\rho(T_t)}\, .
\end{equation}
The feasibility $\mz \in \fP(\rho)$ implies that for $t\in[k]$,  $\sum_{\ell=1}^t \gamma_\ell=z(T_t)
 \le \rho(T_t)$.
Subject to these constraints, the RHS in \eqref{eq:v-t-upper} is maximized when $\gamma_t=\rho(T_t)-\rho(T_{t-1})$ for $t>1$ and $\gamma_1=\rho(T_1)$.
Further, note that $\gamma_{k+1}\le \min\{n, \rho(\mathcal{E})\} - \rho(T_k)$. Therefore,
\begin{equation}\label{eq:v-t-upper-2}
 \frac{1}{n}\sum_{i\in\agents} \frac{u_i(\mx_i)}{u_i(\my_i)} \le \frac{\rho(T_1)}{\rho(T_1)}+ \sum_{t=2}^{k} \frac{\rho(T_t)-\rho(T_{t-1})}{\rho(T_t)} + \frac{\min\{n, \rho(\mathcal{E})\} - \rho(T_k)}{\min\{n, \rho(\mathcal{E})\}}\, .
\end{equation}
We invoke Lemma~\ref{lem:ratio-bound} to get
\begin{equation}\label{eq:v-t-upper-3}
      \frac{1}{n}\sum_{i\in\agents} \frac{u_i(\mx_i)}{u_i(\my_i)}\le 1+\ln \frac{\min\{n, \rho(\mathcal{E})\}}{\rho(T_1)}\le \ln(\min\{n,\rho(\mathcal{E})\})+1\, .
\end{equation}
The final inequality follows because due to our assumption that $\rho(\{j\})\ge 1$ for all $j\in \items$. 
Let $\beta\coloneqq \ln(\min\{n,\rho(\mathcal{E})\})+1$ be the right hand side.
Finally, by the AM-GM inequality,
\begin{align*}
    \left[\prod_{i\in\agents} \frac{u_i(\mx_i)}{u_i(\my_i)}\right]^{\frac{1}{n}} \leq \frac{1}{n}\cdot \sum_{i\in\agents} \frac{u_i(\mx_i)}{u_i(\my_i)} \leq  \beta\, .
\end{align*}
This shows that $\my$ attains $1/\beta$-fraction  of the maximum Nash welfare. It then follows immediately that $\my$ is $\beta$-Pareto efficient.
\end{proof}

%!TEX root=main.tex

\section{Finding approximately envy-free and approximately Pareto-efficient allocations}\label{sec:convex_program}
\subsection{Model and preliminaries}
\paragraph{Fisher markets and competitive equilibrium} In the Fisher market model, we are given a set of $n$ agents $\agents$  and a set of $m$ divisible items $\items$, with $\supply_j\in \Rp$ units available of each $j\in \items$. Every agent $i\in\agents$ has a monotone non-decreasing convex utility function $u_i\,:\,\Rp^\items\to\Rp$ with $u_i(\0)=0$, and a budget $b_i\ge 0$.
A fractional allocation is a vector $\mx\in\Rp^{\agents\times \items}$, and a   \emph{price vector} is a vector $\mp\in\Rp^\items$. 
The \emph{demand correspondence} of agent $i$ is 
 \begin{equation}\label{def:demand}\demand{}_i(\mathbf{p}) \defeq \arg\max_{\mx_i \in\Rp^\items}\left\{ u_i(\mx_i) \, :\, \pr{\mp}{\mx_i}\le b_i\right\} \, .
  \end{equation}

\begin{Definition}[Competitive equilibrium]\label{eq:def-comp} Given a Fisher market instance as above, the allocations and prices $(\mx, \mp)$ form a \emph{competitive (market) equilibrium} if the following conditions hold: 
\begin{enumerate}[(i)]
\setlength\itemsep{0em}
    \item every agent gets an optimal utility at these prices: $\mx_i\in \demand{i}(\mathbf{p})$ for every agent $i\in\agents$. 
    \item no item is oversold: $\sum_i x_{ij} \leq \supply_j$ for all $j\in\items$.
    \item every item with positive price is fully sold:  $\sum_i x_{ij} = s_j$ if $p_j>0$.
\end{enumerate}
\end{Definition}

\paragraph{Matching market equilibrium}
We next describe the Hylland--Zeckhauser matching market model [\citeyear{hylland1979efficient}]. This is similar, but notably different from the Fisher market model above. Again, we are given a set of $n$ agents $\agents$  and a set of $m$ divisible items $\items$, with $\supply_j\in \Rp$ units available of each $j\in \items$. The agents have linear utility functions $u_i(\mx_i)=\sum_{j\in\items} u_{ij}x_{ij}$ as above, but each of them is allowed to buy at most one fractional unit in total. Thus, for a given price vector $\mp\in\Rp^\items$, the \emph{demand correspondence} of agent $i$ is
 \begin{equation}\label{def:demand-matching}\demand{}_i^{\mathrm{HZ}}(\mathbf{p}) \defeq \arg\max_{\mx_i\in\Rp^\items}\left\{ \sum_j u_{ij} x_{ij} \, :\, \sum_{j\in\items} x_{ij} \leq 1 \text{ and } \pr{\mp}{\mx_i}\le 1\right\} \, .
  \end{equation}

\begin{Definition}[Matching market  equilibrium] Given a matching market instance as above, the allocations and prices $(\mx, \mp)$ form a \emph{matching market equilibrium} if {\em (i), (ii)}, and {\em (iii)} in Definition~\ref{eq:def-comp} hold, with the demand system $\demand{}_i^{\mathrm{HZ}}(\mp)$ as in \eqref{def:demand-matching}.
\end{Definition}

We note that the original model \cite{hylland1979efficient} requires that each agent receives exactly $1$ unit, whereas in \eqref{def:demand-matching} we have the weaker requirement of at most $1$ unit. The next proposition shows that when the supply vector is sufficient to satisfy all agents, these two notions coincide. 
\begin{proposition}
For any linear utilities $(u_{ij})_{i\in\agents, j\in\items}$ and any supplies $\supply_j$, a matching market equilibrium exists. Furthermore, if $\sum_j \supply_j \geq n$, then there exists a market equilibrium in which every agent receives a full allocation, i.e., $\sum_j x_{ij} = 1$ for all agents $i$.
\end{proposition}

\begin{proof}
    The existence of a market equilibrium follows from \cite{ garg2022approximating,hylland1979efficient}. For the second part, assume $\sum_j \supply_j \geq n$, and consider a matching market equilibrium $(\mx, \mp)$. We may assume that if $\sum_{j\in\items} x_{ij} < 1$ for any agent $i$, then $\sum_{i\in\agents} x_{ij}=\supply_j$ for all items with $p_j=0$; otherwise, we could increase the allocation $x_{ij}$ between such pairs of agents and items, while maintaining the equilibrium conditions.
    For a contradiction, assume $\sum_{j\in\items} x_{ij} < 1$ for some agent $i$. By the assumption $\sum_j \supply_j \geq n$, there is an item $j$ such that $\sum_{i\in\agents} x_{ij}<\supply_j$. By property \emph{(iii)}, we must have $p_j=0$, but this gives a contradiction. 
\end{proof}

\subsection{Envy-free and approximately Pareto-efficient allocations via Nash welfare}\label{sec:envy-nash}
The main goal of this section is to prove Theorem~\ref{thm:envy-main} for the more general submodular setting. Recall the definitions of polymatroids~\eqref{eq:submodu-polytope} and feasible assignment vectors~\eqref{eq:submod-allocation} for the submodular assignment problem from Section~\ref{sec:sub-prel}. 

Consider the following convex program that maximizes Nash welfare over envy-free feasible allocations. Here, we can enforce envy-freeness by adding a linear inequality for every pair of agents.

\begin{equation}
\begin{aligned}
    \max_{\mx \in \Rp^{\agents \times \items}} \quad & \sum_{i\in\agents} \log \sum_{j\in\items} u_{ij}x_{ij} \\
    \text{s.t.} \quad 
    & \sum_{j\in\items} u_{ij}x_{ij} \geq \sum_{j\in\items} u_{ij}x_{kj}\quad \forall i, k \in \agents\, , \\
    & \sum_{j \in \items} x_{ij} \leq 1 \quad \forall i \in \agents\, , \\
    & \sum_{i \in \agents} \mx_i \in \fP(\rho).  \label{cp::EF-approximatePO-1}
\end{aligned}
\end{equation}

We show the following theorem, which is a more formal restatement of Theorem~\ref{thm:envy-main}.
\begin{theorem}\label{thm:envy-main2}
Let $\rho\,:\,2^{\items}\to \Rp$ be a submodular function and  $u\in\Rp^{\agents\times\items}$ denote the utility values. Then, any optimal solution $\mx\in\R^{\agents\times\items}$  to the convex program \eqref{cp::EF-approximatePO-1} is envy-free and $\ee^{1 / \ee}$-approximates the maximum Nash welfare. Consequently, $\mx$ is  $\ee^{1 / \ee}$-Pareto efficient. Moreover, for any $\varepsilon>0$, an envy-free and $(\ee^{1 / \ee}-\varepsilon)$-Pareto efficient allocation can be computed by an algorithm where the number of arithmetic operations and value oracle queries to the function $\rho$ is polynomial in the encoding length of  $u\in\mathbb{Q}_{\geq0}^{\agents\times\items}$ and $\log(1/\varepsilon)$. 
 \end{theorem}

As a first step, we show a `price of anarchy' type result in the following theorem, showing that the matching market equilibrium approximates the maximum Nash welfare by a constant factor. A closely related statement was recently shown by \citet[Theorem 4.1]{garg2025approximating} in a more general setting of Fisher market with concave utilities; however that model imposes no constraints on the agents' feasible allocations. We adapt their proof to our setting and present it in Appendix~\ref{sec:poa-appendix}.
\begin{restatable}{theorem}{poathm}\label{thm:poa}
    Given a matching market equilibrium instance with available amount $\supply_j$ of item $j$, let $(\mx,\mp)$ be a matching market equilibrium. Then, $\NSW(\mx)\ge \left(\frac{1}{\ee}\right)^{1 / \ee} \NSW(\my)$
    for any feasible allocation $\my\in\Rp^{\agents\times\items}$.
\end{restatable}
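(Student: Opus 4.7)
\medskip
\noindent\textbf{Proof proposal for Theorem~\ref{thm:poa}.}
The plan is a standard ``price of anarchy'' argument: for each agent, use the equilibrium optimality to compare the utility of the equilibrium bundle $\mx_i$ with a scaled copy of the alternative bundle $\my_i$ that is guaranteed to be affordable, and then aggregate across agents via market clearing. The key scalar for each agent $i$ is the cost of $\my_i$ at equilibrium prices,
\[
t_i \;\defeq\; \sum_{j\in\items} p_j\, y_{ij}\, .
\]

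First I would observe that for every agent $i$, the bundle $\my_i/\max\{1,t_i\}$ lies in the Hylland--Zeckhauser demand set: it satisfies the budget constraint $\langle \mp,\my_i/\max\{1,t_i\}\rangle\le 1$ by construction, and it also satisfies $\sum_{j} y_{ij}/\max\{1,t_i\}\le 1$ since $\my$ is feasible and $\max\{1,t_i\}\ge 1$. By the equilibrium optimality condition $\mx_i\in\demand{}_i^{\mathrm{HZ}}(\mp)$ and the linearity of $u_i$, this gives
\[
u_i(\mx_i)\;\ge\;\frac{u_i(\my_i)}{\max\{1,t_i\}}\, ,\qquad\text{equivalently}\qquad \frac{u_i(\my_i)}{u_i(\mx_i)}\;\le\;\max\{1,t_i\}\, .
\]

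Next I would use market clearing to bound $\sum_i t_i$. Since all positively-priced items are fully sold at equilibrium,
\[
\sum_{i\in\agents} t_i \;=\; \sum_{j\in\items} p_j \sum_{i\in\agents} y_{ij}\;\le\; \sum_{j\in\items} p_j\, \supply_j \;=\; \sum_{i\in\agents} \langle \mp,\mx_i\rangle \;\le\; n\, ,
\]
where the first inequality uses feasibility of $\my$ (so $\sum_i y_{ij}\le \supply_j$), and the last uses the per-agent budget constraint $\langle\mp,\mx_i\rangle\le 1$. Combining with the previous step,
\[
\left[\prod_{i\in\agents}\frac{u_i(\my_i)}{u_i(\mx_i)}\right]^{1/n}\;\le\;\left[\prod_{i\in\agents}\max\{1,t_i\}\right]^{1/n}\, .
\]

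The final step---which I expect to be the main obstacle in getting the tight constant $\ee^{1/\ee}$---is to maximize the right-hand side subject to $t_i\ge 0$ and $\sum_i t_i\le n$. The maximum is attained at an extreme configuration: a subset $S\subseteq\agents$ of size $k$ with all $t_i$ equal to $n/k$ on $S$ and $t_i=0$ outside (once we know the optimum satisfies $t_i\in\{0\}\cup[1,\infty)$, AM--GM on $S$ pins down the equal-value profile). Thus
\[
\left[\prod_{i\in\agents}\max\{1,t_i\}\right]^{1/n} \;\le\; \max_{1\le k\le n}\left(\frac{n}{k}\right)^{k/n} \;=\; \max_{x\in(0,1]} x^{-x}\, ,
\]
and elementary calculus shows the right-hand side is maximized at $x=1/\ee$, with value $\ee^{1/\ee}$. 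Rearranging yields $\NSW(\mx)\ge (1/\ee)^{1/\ee}\NSW(\my)$, as required.
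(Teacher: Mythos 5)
Your proposal is correct and follows essentially the same route as the paper's proof in Appendix~A: both use the HZ demand optimality to get the per-agent bound $u_i(\my_i)/u_i(\mx_i)\le\max\{1,\langle\mp,\my_i\rangle\}$, both use market clearing and unit budgets to conclude $\sum_i\langle\mp,\my_i\rangle\le n$, and both then solve the resulting one-dimensional optimization (the paper via Jensen and the bound $\alpha\log(1/\alpha)\le 1/\ee$, you via the extremal equal-value configuration, which is the same calculation). One cosmetic nit: your final display should read ``$\max_{1\le k\le n}(n/k)^{k/n}\le\max_{x\in(0,1]}x^{-x}$'' rather than ``$=$'', since the discrete maximum over $k$ need not be attained at $k=n/\ee$; only the inequality is needed and it holds.
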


\begin{proof}[Proof of Theorem~\ref{thm:envy-main2}]
Let $\mx$ be the optimal solution to \eqref{cp::EF-approximatePO-1}. This (along with all other feasible solutions) is feasible and envy-free by the first set of constraints. Let us now show that it approximates the maximum Nash welfare by a factor $ \ee^{1 / \ee}$. Let $\my$ be the maximum Nash welfare allocation, that is, the optimal solution to

\begin{equation}\label{cp::max-NSW-submod}
 \max_{\my \in \Rp^{\agents \times \items}} \left\{ \sum_{i\in\agents} \log \sum_{j\in\items} u_{ij}x_{ij} \,\,\middle| 
    \,\,\sum_{j \in \items} y_{ij} \leq 1 \quad \forall i \in \agents\, , \quad
     \sum_{i \in \agents} \my_i \in \fP(\rho)\right\}
\end{equation}

Let $\supply=\sum_{i \in \agents} \my_{i}$. Thus, $\supply\in \fP(\rho)$. We consider a matching market with supplies $\supply_j$. Let $(\mz,\mp)$ be a matching market equilibrium. According to Theorem~\ref{thm:poa}, $\NSW(\mz)\ge (1 / \ee)^{1 / \ee}\NSW(\my)$. Now, $\mz$ is a feasible solution to \eqref{cp::EF-approximatePO-1}, because $\sum_{i\in \agents} \mz_i\le \supply\in \fP(\rho)$, and $\mz$ is envy-free because it is a competitive equilibrium. Hence, 
\[\NSW(\mx)\ge \NSW(\mz)\ge \left(\frac{1}\ee\right)^{1 / \ee}\NSW(\my).\]
The $(1 / \ee)^{1 / \ee}$-Pareto efficiency of $\mx$ is then immediate. 

Finally, for polynomial-time computability, we can use the ellipsoid method to solve \eqref{cp::EF-approximatePO-1} up to accuracy $\varepsilon$ for any $\varepsilon>0$. For the separation oracle, we need to be able to separate $\sum_{i \in \agents} \mx_i \in \fP(\rho)$; this can be done using submodular function minimization. We refer the reader to \cite{gls} for the details.
\end{proof}

%!TEX root=main.tex
\subsection{Approximately envy-free and Pareto-efficient allocations in general Fisher markets}\label{sec:Fisher}
Analogous to the random assignment problem, a natural question is whether it is possible to efficiently find envy-free but approximately Pareto-efficient allocations in the more general Fisher market setting. In this section, we answer this question by proving Theorem~\ref{thm:Fisher}, showing how to compute in polynomial-time an allocation that is $(1+\varepsilon)$-approximately envy-free and $(2+2\varepsilon)$-approximately maximum Nash welfare.

We consider a Fisher market with a set $\agents$ of $n$ agents, a set $\items$ of $m$ items, and monotone non-decreasing concave utilities $u_i\,:\,\Rp^\items\to\Rp$. We assume that the following \emph{oracle} is available:

\begin{center}
\fbox{
\begin{minipage}{0.8\textwidth}
\noindent
{\sf Oracle}  \\
\textbf{Input:} Agent $i$, a real number $s$, and two bundles $\mx$ and $\my$. \\
\textbf{Output:} The smallest $\alpha \geq 0$ such that $u_i(\mx + \alpha \my) \geq s$.
\end{minipage}
}
\end{center}

Given an evaluation oracle to $u_i$, we can implement this using binary search on the values of $\alpha$ to arbitrary $\varepsilon$-accuracy. For simplicity of the description, we assume an exact oracle is available.

In the remainder of this section, we prove Theorem~\ref{thm:Fisher}, which we restate more formally below:

\begin{theorem}\label{thm:Fisher2}
Given a Fisher market instance with $n$ agents $\agents$, $m$ items $\items$, monotone concave utility functions $u_i\,:\, \R^\items\to\Rp$, unit budgets, and a parameter $\varepsilon>0$, there exists an algorithm that computes a $(1+\varepsilon)$-envy-free and $(2+2\varepsilon)$-approximate maximum Nash welfare allocation. The number of arithmetic operations and value oracle calls to the utility functions made by the algorithm is polynomial in $n$, $m$, and $1/\varepsilon$. Such an allocation is also $(2+2\varepsilon)$-approximately Pareto efficient.
\end{theorem}

A key challenge is that the approach used for the random assignment problem in Theorem~\ref{thm:envy-main} of solving the convex program~\eqref{cp::EF-approximatePO-1} does not work: the constraint $u_i(\mx_i)\ge u_i(\mx_k)$ is convex only for linear utility functions. 

We use the following approach: first, the algorithm constructs a partial allocation with the desired properties. Starting from a Nash welfare maximizing allocation $\mx^\star$, each agent $i$ is initially assigned the fractional bundle $\mx_{k}^\star/n$ that maximizes $i$'s utility. We then let agents repeatedly swap their bundles for unallocated fractional bundles until there is no $(1+\varepsilon)$-envy between agents' bundles and the unallocated bundles. In a second stage, we gradually extend the partial allocation to a complete allocation of the items while maintaining approximate envy-freeness.

This approach is inspired by the work of \citet{barman2024}, who showed the existence of an EF1 and $2$-approximate maximum Nash welfare allocation of \emph{indivisible items}. That is, we do not allow randomization, but need to fully allocate each item to the same agent; EF1 is a relaxation of envy-freeness for the indivisible setting. Our algorithm can be seen as a continuous analogue of their algorithm. Due to the divisibility of items, our algorithm is considerably simpler.

Given an allocation $\mx\in\R^{\agents\times\items}$, the \emph{envy graph} $G = (V, E)$ is constructed as follows: each node represents an agent, and there is a directed edge $(i, j) \in E$ if and only if agent $i$ envies agent $j$, that is, $u_i(\mx_j) > u_i(\mx_i)$.

\paragraph{Cycle Elimination} 
An important subroutine eliminates cycles from the envy graph.
Let $i_0 \rightarrow i_1 \rightarrow \cdots \rightarrow i_k \rightarrow i_0$ be a cycle in the envy graph. We can  eliminate it by rotating the allocations: assign $\mx_{i_1}$ to $i_0$, $\mx_{i_2}$ to $i_1$, $\ldots$, and $\mx_{i_0}$ to $i_k$. Such a reallocation reduces the number of edges in the envy graph. Finding an envy cycle and a reallocation can be implemented in polynomial time.

\subsection{Stage I: constructing a fractional allocation}
The first stage of our algorithm is a partial allocation that is  $(1+\varepsilon)$-envy-free and $2(1+\varepsilon)$-approximates the maximum Nash welfare. We begin with a Nash welfare maximizing allocation $\mx^\star$.\footnote{We can obtain by convex programming an $(1+\varepsilon)$-approximate maximum Nash welfare allocation; the running time polynomially depends on $\log(1/\varepsilon)$. For simplicity, we describe the algorithm with an exact maximum Nash welfare allocation; one can adjust it to use an approximate solver by choosing a smaller $\varepsilon$-value.
} 
We consider the $n$ bundles $\mx^\star_i$ in this allocation, and initialize the process by letting each agent get their favorite $1/n$-portion from one of these bundles. The remaining portions of each bundle are denoted by $\{\my_i\}_{i\in\agents}$.

During the process, we keep updating each remaining bundle $\my_i$ if any agent prefers $\my_i$ over their current allocation. Specifically, we divide $\my_i$ into the largest possible portion such that no agent would envy this portion more than $(1 + \varepsilon)$-times their current allocation. This portion is then assigned to the agent who envies it the most. In exchange, we reclaim that agent's previous allocation and update each of the remaining bundles $\{\my_i\}_{i\in\agents}$ accordingly. We proceed until there are no more remaining bundles $\my_i$ preferred by some agent to their current allocation.

\begin{algorithm}[ht]
\caption{Partial allocation}\label{alg:partial}
\KwIn{Nash welfare maximizing allocation $\mx^\star\in\Rp^{\agents\times \items}$ and parameter $\varepsilon > 0$}
\KwOut{A $(1 + \varepsilon)$-envy-free and $2(1 + \varepsilon)$-approximate maximum Nash welfare partial allocation}

Initialize $\{\my_i\}_{i \in \agents} \gets \{\mx^\star_i\}_{i \in \agents}$ \;
 \lForEach{$i \in \agents$}{
$h(i) \gets \arg \max_k u_i\left(\frac{1}{n} \my_k\right)$,  $\mathbf{x}_i \gets \frac{1}{n} \my_{h(i)} $}
 \ForEach{$k \in \agents$}{%
 $$\my_k \gets \left(1 - \frac{\left|\{i ~|~ h(i) = k \}\right|}{n}\right) \my_k$$%
 }
\While{ $\exists i,\ell\in \agents$ such that $u_i(\my_{\ell}) >   (1 + \varepsilon)u_i(\mx_i)$}{
\lForEach{$i\in\agents$}{
    compute the minimum $\alpha_i$ such that
    $u_i(\alpha_i \cdot \my_{\ell}) \geq  (1 + \varepsilon)u_i(\mathbf{x}_i)$}
    $k \gets \arg \min_{i\in\agents} \{\alpha_i\}$ and $\tilde{\mx}_{k} = \alpha_k \cdot \my_{\ell}$\;
    $\my_{h(k)} \gets  \my_{h(k)} + \mx_{k}$ \tcp*[r]{return $k$'s previous allocation}
     $\mathbf{x}_{k} \gets \tilde{\mx}_{k}$; \, $\my_{\ell} \gets  \my_{\ell} - \tilde{\mx}_k$ and $h(k) \gets \ell$ \;
}
\Return{Partial allocation $\mx$}
\end{algorithm}

\begin{lemma}
    Algorithm~\ref{alg:partial} terminates after at most $\frac{n^2}{\varepsilon}$ iterations.
\end{lemma}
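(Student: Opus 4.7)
The plan is a potential-function argument using $\Phi = \sum_{i\in\agents} \log u_i(\mx_i)$ over iterations of the while loop, showing that $\Phi$ strictly increases per step by a bounded amount while its total increase is bounded by $n \log n$.

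The local step is immediate from the algorithm's update rule. Let $k$ and $\ell$ be the agent and bundle selected in a given iteration; by the choice of $\alpha_k$ and the argmin rule, the new bundle $\mx_k^{\text{new}} = \alpha_k \my_\ell$ satisfies $u_k(\mx_k^{\text{new}}) \geq (1+\varepsilon) u_k(\mx_k^{\text{old}})$, while $\mx_i$ is unchanged for every $i \neq k$. Hence exactly one summand of $\Phi$ grows, by at least $\log(1+\varepsilon)$, so $\Phi$ increases by at least $\log(1+\varepsilon)$ per iteration.

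Next, I would establish the invariant
\[
\sum_{i\in\agents} \mx_i^{(t)} + \sum_{k \in \agents} \my_k^{(t)} = \sum_{k\in\agents} \mx^*_k
\]
by a straightforward induction on the loop body: the two updates $\my_\ell \gets (1-\alpha_k)\my_\ell$, $\mx_k \gets \alpha_k \my_\ell$ and $\my_{h(k)} \gets \my_{h(k)} + \mx_k^{\text{old}}$ exactly redistribute mass among $\{\mx_i\}\cup\{\my_k\}$. Coordinate-wise this gives $\sum_i x_{ij}^{(t)} \leq \sum_k x_{kj}^* \leq \supply_j$, so $\{\mx_i^{(t)}\}$ is a feasible Fisher allocation. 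Since $\mx^*$ maximizes Nash welfare over such allocations, $\prod_i u_i(\mx_i^{(t)}) \leq \prod_i u_i(\mx^*_i)$, i.e.\ $\Phi^{(t)} \leq \sum_i \log u_i(\mx^*_i)$ for every $t$.

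For the initial value, concavity of $u_i$ together with $u_i(\0)=0$ gives $u_i(\tfrac{1}{n}\mx^*_i) \geq \tfrac{1}{n} u_i(\mx^*_i)$, and since $h(i)$ is chosen to maximize $u_i(\tfrac{1}{n}\my_k)$, we have $u_i(\mx_i^{(0)}) \geq u_i(\tfrac{1}{n}\mx^*_i) \geq \tfrac{1}{n}u_i(\mx^*_i)$. Summing, $\Phi^{(0)} \geq \sum_i \log u_i(\mx^*_i) - n\log n$. Combining the three estimates, the total increase of $\Phi$ is at most $n\log n$, so the number of iterations is at most
\[
\frac{n \log n}{\log(1+\varepsilon)} \leq \frac{2 n \log n}{\varepsilon} \leq \frac{n^2}{\varepsilon},
\]
using $\log(1+\varepsilon) \geq \varepsilon/(1+\varepsilon)$ and $2\log n \leq n$. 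The only mildly subtle point is the feasibility-plus-optimality step, namely verifying the conservation invariant and using Nash optimality of $\mx^*$ on the resulting Fisher-market-feasible partial allocation; the rest is bookkeeping with logs and concavity.
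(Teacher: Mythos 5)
Your proof is correct, and in fact it is genuinely different from and stronger than the paper's own argument. The paper uses the linear potential $\Phi'=\sum_i u_i(\mx_i)/u_i(\items)$, lower-bounds it by $1/n$ initially and upper-bounds it by $n$ crudely, and observes a per-iteration gain of at least $\varepsilon/n^2$; this yields the conclusion ``at most $n^3/\varepsilon$ iterations,'' which actually falls short of the lemma's claimed $n^2/\varepsilon$ bound (the paper's own proof text ends with $n^3/\varepsilon$). Your approach makes two substitutions that close this gap: (1) you use the \emph{logarithmic} potential $\Phi = \sum_i \log u_i(\mx_i)$, whose per-iteration gain is a constant $\log(1+\varepsilon)\ge\varepsilon/(1+\varepsilon)$ regardless of the agents' current utility levels, rather than a gain that degrades with the smallest utility; and (2) instead of bounding $\Phi$ crudely, you exploit the \emph{conservation invariant} $\sum_i\mx_i+\sum_k\my_k=\sum_k\mx^\star_k$ together with Nash-optimality of $\mx^\star$ to get the tight upper bound $\Phi^{(t)}\le\sum_i\log u_i(\mx^\star_i)$, and you use concavity with $u_i(\0)=0$ for the initial lower bound $\Phi^{(0)}\ge\sum_i\log u_i(\mx^\star_i)-n\log n$. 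The total increase is thus at most $n\log n$, giving $n\log n/\log(1+\varepsilon)\le 2n\log n/\varepsilon\le n^2/\varepsilon$ iterations (the last step using $2\log n\le n$, and the middle step assuming $\varepsilon\le 1$, which is the regime of interest). The only thing worth stating explicitly is the coordinate-wise feasibility step $\sum_i x_{ij}^{(t)}\le\sum_k x^\star_{kj}\le\supply_j$ that you invoke before appealing to Nash-optimality; you do flag it, and it is correct. Net result: your argument is cleaner, and it actually proves the bound the lemma states, whereas the paper's written proof only establishes $n^3/\varepsilon$.
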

\begin{proof}
    Consider the potential function $\sum_i \frac{u_i(\mx_i)}{u_i(\items)}$. Initially, this potential is at least $\frac{1}{n}$ as $u_i(\mx_i) \geq \frac{1}{n^2} u_i(\items)$ by concavity, and the potential is always bounded above by $n$. In each iteration, the potential increases by at least $\varepsilon \cdot \frac{u_i(\mx_i)}{u_i(\items)} \geq \varepsilon \cdot \frac{1}{n^2}$. Therefore, the number of iterations is at most $\frac{n - \frac{1}{n}}{\varepsilon \cdot \frac{1}{n^2}} \leq \frac{n^3}{\varepsilon}$.
    Hence, the algorithm terminates after at most $\frac{n^3}{\varepsilon}$ iterations.
\end{proof}

\begin{lemma}
    Algorithm~\ref{alg:partial} outputs  $(1+\varepsilon)$-envy-free partial allocation $\mx$ that   $2(1+\varepsilon)$-approximately maximizes Nash welfare.
\end{lemma}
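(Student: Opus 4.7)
The plan is to establish a structural invariant on the algorithm's state, then use it to derive both the $(1+\varepsilon)$-envy-freeness and the $2(1+\varepsilon)$-approximate maximum Nash welfare guarantee. The invariant I would prove by induction on the iteration counter is: at every point in Algorithm~\ref{alg:partial}, each remaining bundle has the form $\my_\ell = \beta_\ell \mx^*_\ell$ and each allocated bundle has the form $\mx_k = \gamma_k \mx^*_{h(k)}$ for non-negative scalars $\beta_\ell, \gamma_k$ satisfying $\beta_\ell + \sum_{j : h(j) = \ell}\gamma_j = 1$ for every $\ell$. The base case is immediate from the initialization ($\my_k = (1-n_k/n)\mx^*_k$, $\mx_i = \mx^*_{h(i)}/n$). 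The induction step is a direct check of the three updates in the while loop: each only rescales or recombines the coefficient in front of a fixed $\mx^*_\ell$.

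With the invariant in hand, I would prove $(1+\varepsilon)$-envy-freeness by induction: $u_i(\mx_j) \leq (1+\varepsilon) u_i(\mx_i)$ for all $i, j$ at every iteration. After initialization this is automatic, since agent $i$ chose $h(i)$ to maximize $u_i(\mx^*_k/n)$. In a swap step, by continuity of $u_i$ the threshold $\alpha_i$ satisfies $u_i(\alpha_i \my_\ell) = (1+\varepsilon) u_i(\mx_i)$ when $\alpha_i \leq 1$; because $k = \arg\min_i \alpha_i$, monotonicity of $u_i$ gives $u_i(\mx_k^{\text{new}}) = u_i(\alpha_k \my_\ell) \leq \max\{u_i(\alpha_i \my_\ell), u_i(\my_\ell)\} \leq (1+\varepsilon) u_i(\mx_i)$ for every $i \neq k$ (handling both the case $\alpha_i \leq 1$ and the case $\alpha_i > 1$, in which $u_i(\my_\ell) \leq (1+\varepsilon) u_i(\mx_i)$). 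Agent $k$'s own utility strictly grows by a factor $(1+\varepsilon)$, so the envy bound from $k$ to every other agent is preserved; the pairs not involving $k$ are unchanged.

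For the Nash welfare bound, fix an agent $i$. The invariant decomposes $\mx^*_i$ into pieces $\my_i = \beta_i \mx^*_i$ and $\{\mx_j = \gamma_j \mx^*_i : h(j) = i\}$ that are all non-negative scalar multiples of the single vector $\mx^*_i$. Define $\phi_i(t) = u_i(t \mx^*_i)$ for $t \geq 0$; it is concave with $\phi_i(0) = 0$, hence subadditive on $\Rp$ (for any $a,b > 0$, concavity gives $\phi_i(a) \geq \frac{a}{a+b}\phi_i(a+b)$ and symmetrically for $\phi_i(b)$, summing to $\phi_i(a+b)$). Therefore
\begin{equation*}
u_i(\mx^*_i) = \phi_i(1) \leq \phi_i(\beta_i) + \sum_{j : h(j)=i}\phi_i(\gamma_j) = u_i(\my_i) + \sum_{j : h(j) = i} u_i(\mx_j).
\end{equation*}
Applying the termination condition $u_i(\my_i) \leq (1+\varepsilon) u_i(\mx_i)$ and the envy-freeness $u_i(\mx_j) \leq (1+\varepsilon) u_i(\mx_i)$, this becomes $u_i(\mx^*_i) \leq (1 + n_i)(1+\varepsilon) u_i(\mx_i)$ with $n_i = |\{j : h(j) = i\}|$. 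Since $\sum_i n_i = n$, AM--GM yields $\bigl(\prod_i (1+n_i)\bigr)^{1/n} \leq 2$, and hence $\NSW(\mx) \geq \NSW(\mx^*)/(2(1+\varepsilon))$.

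The main technical obstacle I anticipate is that a general monotone concave $u_i : \Rp^\items \to \Rp$ need not be subadditive on $\Rp^\items$ (for example $u_i(x) = \min_j x_j$ satisfies $u_i((1,0)+(0,1)) = 1 > 0 = u_i(1,0) + u_i(0,1)$), so one cannot directly bound $u_i(\mx^*_i)$ by the sum of $u_i$-values of its pieces. The structural invariant circumvents this by placing all relevant pieces on a single ray through the origin, reducing the inequality to subadditivity of the one-dimensional concave function $\phi_i$, which does follow from concavity plus $\phi_i(0) = 0$.
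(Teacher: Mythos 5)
Your proof is correct and follows essentially the same route as the paper's---proportional decomposition of each optimal bundle $\mx^\star_k$, a subadditivity bound across the pieces, the termination and envy-freeness bounds, and AM--GM with the count of pieces summing to at most $2n$---but your handling of the subadditivity step is more careful and in fact repairs an imprecision in the paper. The paper asserts that subadditivity of $u_k$ ``follows by concavity and $u_i(\0)=0$,'' which, as your $\min$-example shows, is false for multivariate concave functions on $\Rp^\items$. What actually makes the argument work is precisely your structural invariant: every piece into which $\mx^\star_k$ is split is a scalar multiple of $\mx^\star_k$, so only the one-dimensional restriction $\phi_k(t) = u_k(t\,\mx^\star_k)$ is involved, and a one-variable concave function vanishing at $0$ \emph{is} subadditive. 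The paper gestures at this via the ``proportional fraction'' remark, but you make the invariant explicit and identify exactly where it is needed; this is a genuine improvement in rigor. Your use of $1+n_i$ in place of the paper's $t_k$ (which also credits an empty remainder bundle) is a cosmetic difference with no effect on the final bound, since both sum to at most $2n$.
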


\begin{proof}
It is easy to check that $(1 + \varepsilon)$-envy freeness is maintained throughout the algorithm.
To show the Nash welfare guarantee, note that every agent gets a proportional fraction of one of the original bundles $\mx^\star_k$.
Let $t_k$ denote the number of parts bundle $\mx^\star_k$ is divided into. This includes the number of agents that receive a part of bundle $k$, i.e., $h(i)=k$, and plus one if $\my_k\neq\emptyset$ at the end of the algorithm. Thus,  $\sum_{k\in\agents} t_k \leq 2n$. At the end of the algorithm, the Nash welfare of the optimal allocation $\mx^\star$ satisfies:
\begin{align*}
    \prod_{k\in\agents} \left( u_k(\mx^\star_k) \right)^{\frac{1}{n}} 
    &\leq \prod_{k\in\agents} \left( u_k(\my_k) + \sum_{i: h(i) = k} u_k(\mx_i) \right)^{\frac{1}{n}} \\
    &\leq \prod_{k\in\agents} \left( t_k (1 + \varepsilon) u_k(\mx_k) \right)^{\frac{1}{n}} \\
    &\leq 2(1 + \varepsilon) \prod_{k\in\agents} \left( u_k(\mx_k) \right)^{\frac{1}{n}}\, .
\end{align*}
Here, the first inequality used subadditivity of $u_k$ that follows by concavity and $u_i(\0)=0$. The second inequality uses that by approximate envy-freeness, $u_k(\my_k)\le (1+\varepsilon) u_k(\mx_k)$ and $u_k(\mx_i)\le (1+\varepsilon) u_k(\mx_k)$ for all $i\in\agents$.
The final inequality uses the AM-GM inequality and that $\sum_{k}t_k\le 2n$.\end{proof}

\subsection{Stage II: completing the allocation}
In the second algorithm, we begin with a partial allocation that $2(1+\varepsilon)$-approximates the maximum Nash welfare and is  $(1+\varepsilon)$-envy-free, and extend it to a complete allocation preserving these properties. 

At each iteration, we construct the envy graph based on the current allocation and eliminate any cycles by rotating the allocations among the agents involved. We choose one agent who is not currently envied by anyone and allocate the remaining items until this agent becomes envied by some other agent by a factor greater than $(1 + \varepsilon)$.

\begin{algorithm}[H]
\caption{Extending Partial  to Complete Allocation}
\KwIn{A partial allocation $\mx\in\Rp^{\agents\times \items}$ that is $(1 + \varepsilon)$-envy-free and $2(1 + \varepsilon)$-approximate maximum Nash welfare partial allocation}
\KwOut{A complete allocation $\mz\in\Rp^{\agents\times \items}$  of all items with the same properties}
$\mz\gets\mx$
\While{there exist unallocated items}{
    Construct the envy graph based on the current allocation $\mz$ \;
    \While{a cycle exists in the envy graph}{
        Eliminate the cycle by rotating the allocations among the agents in the cycle \;
    }
    Select an agent $i\in \agents$ who is not envied by any other agent \;
    Allocate additional goods to agent $i$ until some agent $k$ envies $i$ by  a factor of $(1 + \varepsilon)$ \;
}
\Return{$\mz$}
\end{algorithm}
\begin{lemma}
    The algorithm terminates after at most $\frac{n^3}{\varepsilon}$ iterations.
\end{lemma}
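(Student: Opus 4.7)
The plan is to adapt the potential-function argument used for Algorithm~\ref{alg:partial}. The natural candidate is
\[
\Phi(\mz) \;=\; \sum_{i\in\agents} \frac{u_i(\mz_i)}{u_i(\items)},
\]
which is bounded above by $n$. Using the fact that Stage~1 outputs a $2(1+\varepsilon)$-approximately max-NSW allocation, together with the proportionality bound $u_i(\mx^\star_i)\ge u_i(\items)/n$ enjoyed by the max Nash welfare allocation and the fact that each $u_i(\mz_i)$ is nondecreasing throughout Stage~2, one concludes that the initial value of $\Phi$ is $\Omega(1/n)$, so the total range of $\Phi$ over the execution is $O(n)$.

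The core claim I would then prove is that each outer iteration raises $\Phi$ by at least $\Omega(\varepsilon/n^2)$. Two contributions arise. First, the inner cycle-elimination loop only helps: rotating the bundles along a cycle $i_0\to i_1\to\cdots\to i_k\to i_0$ strictly increases $u_{i_\ell}(\mz_{i_\ell})$ for every agent in the cycle, so $\Phi$ only grows; moreover, each rotation strictly decreases the number of envy edges, which bounds the total cost of cycle elimination across the entire run of the algorithm. Second, the allocation step adds a bundle $\Delta$ to the selected unenvied agent $i$. By the stopping rule some $k$ satisfies $u_k(\mz_i^{\text{new}})\ge(1+\varepsilon)u_k(\mz_k)$, while $u_k(\mz_i^{\text{old}})\le u_k(\mz_k)$ because $i$ was unenvied. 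Subadditivity of the concave function $u_k$ (which holds because $u_k(\0)=0$) then yields the central inequality
\[
u_k(\Delta)\;\ge\; u_k(\mz_i^{\text{new}})-u_k(\mz_i^{\text{old}})\;\ge\;\varepsilon\,u_k(\mz_k).
\]

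The main obstacle is that this inequality lower-bounds $\Delta$'s value only from $k$'s perspective, whereas $\Phi$ accumulates each agent's valuation of \emph{their own} bundle. I would bridge this in one of two ways. Route (a): replace $\Phi$ by the enlarged potential $\Phi'(\mz)=\sum_{i,k\in\agents} u_k(\mz_i)/u_k(\items)$, which is bounded above by $n^2$; the $k$-gain $u_k(\Delta)/u_k(\items)\ge \varepsilon u_k(\mz_k)/u_k(\items)$ is $\Omega(\varepsilon/n)$ once the proportionality bound $u_k(\mz_k)\ge u_k(\items)/n$ (inherited from Stage~1 and preserved through Stage~2 by monotonicity of each $u_k(\mz_k)$) is invoked, giving an $\Omega(\varepsilon/n)$ increase per outer iteration and hence $O(n^3/\varepsilon)$ iterations. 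Route (b): stipulate that within each allocation step agent $i$ receives items in decreasing order of their own marginal $u_i$-value, so that a symmetric subadditivity argument also yields $u_i(\Delta)\ge \varepsilon\,u_i(\mz_i)$ and hence directly an $\Omega(\varepsilon/n^2)$ increase in the original $\Phi$. Either route, combined with the range bound, produces the claimed $n^3/\varepsilon$ bound on the number of outer iterations.
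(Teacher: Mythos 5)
Your Route~(a) is essentially the paper's own argument. The paper uses exactly the enlarged potential $\sum_{i,k\in\agents} u_i(\mz_k)/u_i(\items)$, observes that cycle elimination cannot decrease it, and bounds the per-iteration gain by the contribution $\varepsilon\,u_k(\mz_k)/u_k(\items)$ of the newly envying pair $(k,i)$, exactly as you do. Two small remarks. First, the proportionality bound you invoke, $u_k(\mz_k)\ge u_k(\items)/n$, is overstated: what Stage~1 actually guarantees (and what the paper uses in the Stage~1 lemma) is $u_k(\mx_k)\ge u_k(\items)/n^2$, since agent $k$ starts with a $1/n$ fraction of the bundle $\my_{h(k)}$ which is itself worth at least $u_k(\items)/n$ to $k$, and concavity with $u_k(\0)=0$ only gives $u_k(\my_{h(k)}/n)\ge u_k(\my_{h(k)})/n$; after that $u_k(\mz_k)$ is merely nondecreasing. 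With the $1/n^2$ bound, the per-iteration gain is $\Omega(\varepsilon/n^2)$ against an $O(n^2)$ potential range, yielding $O(n^4/\varepsilon)$ iterations. Notably, the paper's own proof also arrives at $n^4/\varepsilon$ (it states a per-iteration gain of $\varepsilon/n^3$ and a potential cap of $n$), which does not match the $n^3/\varepsilon$ claimed in the lemma statement; so your constant discrepancy mirrors an inconsistency already present in the paper rather than a flaw unique to your argument. Second, Route~(b) does not work as stated: the stopping rule gives $u_k(\Delta)\ge\varepsilon\,u_k(\mz_k)$ only for the particular agent $k$ who starts envying $i$, and this says nothing about $u_i(\Delta)$, since $i$ and $k$ may value $\Delta$ arbitrarily differently; the order in which $i$ acquires the pieces of $\Delta$ cannot change the final value $u_i(\Delta)$, so the ``symmetric subadditivity argument'' has no purchase. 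Route~(a) is the right (and the paper's) route; Route~(b) should be dropped.
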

\begin{proof}
    Consider the potential function $\sum_{i,k\in\agents} \frac{u_i(\mx_k)}{u_i(\items)}$. Since the utility function is non-decreasing, each iteration increases the potential by at least $\varepsilon\cdot \frac{u_k(\mx_i)}{u_k(\items)} \geq \varepsilon/{n^3}$. The potential function is upper bounded by $n$. Therefore, this gives an upper bound $\frac{n^4}{\varepsilon}$ on the total number of iterations.
\end{proof}
Theorem~\ref{thm:Fisher2} easily follows by the above statements.
\begin{remark}
The approximation ratio, $(2 + 2 \varepsilon)$, guaranteed in Theorem~\ref{thm:Fisher2} for Fisher markets is larger than the approximation ratio, $(1 / \ee)^{1 / \ee}$, guaranteed in Theorem~\ref{thm:envy-main} for matching markets. Moreover, the algorithm in Theorem~\ref{thm:Fisher2} is not applicable to matching markets. Even though the first stage can similarly find an $(1+\varepsilon)$-envy free and $(2+2\varepsilon)$-approximate maximum Nash welfare allocation, the second stage is not directly applicable. This is because the set of possible allocations of the agents is constrained by $\sum_{j\in \items} x_{ij}\le 1$; so, we cannot allocate additional items to agents who already have one fractional unit. 
\end{remark}

\begin{remark}
As mentioned in the Introduction, \cite{garg2025approximating} showed that maximizing Nash welfare yields a Pareto efficient and 2-approximate envy-free allocation. This raises the natural question: can we obtain an envy-free and approximately Pareto efficient allocation starting from it? One possible approach is to first reduce each agent's bundle to achieve exact envy-freeness, and then redistribute the remaining goods. Ideally, during this procedure, each agent would lose at most a constant fraction of their utility. However, this guarantee cannot always be ensured, as the following example demonstrates.
\begin{example}
Consider a case with two agents and a single item. Both agents have unit budgets. The utility function of the first agent is linear:
\[
u_1(t) = t.
\]
The utility function of the second agent is defined as:
\[
u_2(t) = 
\begin{cases}
\alpha t& \text{for } t \leq \frac{1}{\alpha}, \\
1 + \beta \left(t - \frac{1}{\alpha}\right) & \text{for } t > \frac{1}{\alpha},
\end{cases}
\]
where $\alpha$ is a large constant and $\beta \leq \alpha$.

When $\beta = \frac{1 - \frac{1}{\alpha}}{1 - \frac{2}{\alpha}}$, the Nash Welfare maximizing allocation is:
\[
x_1 = 1 - \frac{1}{\alpha}, \quad x_2 = \frac{1}{\alpha}.
\]    
Since utility is increasing, to ensure envy-freeness, the first agent's allocation would need to be reduced to $\frac{1}{\alpha}$. In the limit as $\alpha \rightarrow \infty$, it becomes impossible to guarantee that the first agent retains a constant fraction of the utility.
\end{example}
\end{remark}

%!TEX root=main.tex
\section{Efficiency of the PS mechanism with chores}
\subsection{Model and preliminaries}
We consider the random assignment problem in the chores setting, where $\agents$ denotes a set of $n$ agents and $\items$ a set of $m$ items (chores). In the ordinal model, each agent provides a strict preference orderings $(\succ_i)_{i\in\agents}$ over the chores, without specifying the intensity of their preferences. In the cardinal model, each agent $i\in\agents$ specifies a non-negative disutility $d_{ij}\ge 0$ for each chore $j\in\items$. Given a fractional allocation $\mx\in\Rp^{\agents\times \items}$, the expected disutility of agent $i$ is defined as $d_i(\mx_i):= \sum_{j\in\items} d_{ij} x_{ij}$. The cardinal preferences $(d_{ij})_{j\in\items}$ are \emph{consistent} with the ordinal ordering $\succ_i$, if  $j\succ_i j'$ implies $d_{ij}\le d_{ij'}$. 

A mechanism is \emph{envy-free} in the cardinal model if $d_i(\mx_i)\le d_i(\mx_{i'})$ for any $i,i'\in \agents$. In the ordinal model, envy-freeness is defined via stochastic dominance as in the goods setting: an allocation $\mx$ is envy-free if, for any $i, i'\in \agents$, $\mx_{i'}$ does not stochastically dominates $\mx_i$ with respect to $\succ_i$. 

As in the goods setting, the simultaneous eating algorithm (i.e., the PS mechanism) extends naturally to the chores setting. All agents simultaneously consume their most preferred available chore (i.e., lowest disutility chore) at a uniform rate. When a chore is fully consumed, agents switch to their next most preferred available chore. The process continues until each agent has consumed one unit in total or all chores are exhausted. As in the goods setting, the PS mechanism outputs an envy-free random assignment and is \emph{ordinally-efficient} in the chores setting, i.e., no other randomized assignment stochastically dominates it for all agents. 

In the cardinal model, efficiency is captured by Pareto dominance: an allocation $\mx'$ Pareto dominates $\mx$ if $\sum_j d_{ij}x'_{ij} \le \sum_j d_{ij}x_{ij}, \forall i\in\agents$, with strict inequality for at least one agent. A mechanism is Pareto-efficient if no such domination occurs. The PS mechanism is not Pareto-efficient in the cardinal model in the chores setting either, as seen by a straightforward modification of Example~\ref{example1}. This motivates the notion of \emph{$\gamma$-approximately Pareto efficient lotteries} for chores: For $\gamma\ge 1$, a randomized assignment $\mx\in\R^{\agents\times\items}$ is said to be $\gamma$-approximately Pareto efficient (or $\gamma$-Pareto efficient for short), if there does not exist another assignment $\my$ such that $\gamma d_i(\my_i)\le d_i(\mx_i)$ for all $i\in\agents$, with strict inequality for at least one agent. 

\subsection{Random assignment problem with chores}
In this section, we prove Theorem~\ref{thm:chores-main}.
We assume that $m\ge n$, as the case $m<n$ can be reduced to $m=n$ (see Note~\ref{footnote2} in Section~\ref{sec:intro-chores}).
We start by showing the efficiency guarantee. 
\begin{lemma}
If all disutilities are positive, then the PS is $n$-approximately Pareto-efficient for matching markets with chores.
\end{lemma}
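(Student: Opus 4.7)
The plan is to adapt the tight-set style analysis from Theorem~\ref{thm:submod-eating-upper} to chores, but exploit the per-chore capacity constraint directly rather than route through a Nash-welfare bound. Let $\mx$ be the PS allocation. I will index the exhaustion events $t=1,\ldots,k$, let $T_t\subseteq\items$ denote the set of chores exhausted by time $\tau_t$ with interval durations $\delta_t=\tau_t-\tau_{t-1}$, and define $D_i(t)=\min_{j\notin T_{t-1}} d_{ij}$, the disutility rate at which agent $i$ eats during interval $t$. This sequence is non-decreasing in $t$, so $d_i(\mx_i)=\sum_t \delta_t D_i(t)\le D_i(k)$. The structural observation mirroring the goods case is: whenever $j\notin T_{t-1}$, agent $i$'s actual choice during interval $t$ is a minimum-disutility available chore, hence $d_{ij}\ge D_i(t)$; in particular, for $j\notin T_{k-1}$ one has $d_{ij}\ge D_i(k)\ge d_i(\mx_i)$.

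Next, I choose a threshold interval $t^*$ with $|T_{t^*}|\le n-1$. Since each exhausted chore has unit capacity and total PS consumption equals $n$, we have $|T_k|\le n$, and $|S_k|\ge 1$, so we may take $t^*=k-1$ if $|T_k|=n$ and $t^*=k$ otherwise. In both cases $d_{ij}\ge D_i(k)$ for every $j\notin T_{t^*}$. For any feasible alternative $\my$, the capacity constraints $\sum_i y_{ij}\le 1$ combined with $\sum_j y_{ij}=1$ yield
\[
\sum_i \sum_{j\notin T_{t^*}} y_{ij} \;=\; n-\sum_{j\in T_{t^*}}\sum_i y_{ij} \;\ge\; n-|T_{t^*}| \;\ge\; 1,
\]
so by averaging there exists an agent $i^*$ with $\sum_{j\notin T_{t^*}} y_{i^*,j}\ge 1/n$. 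For this agent, $d_{i^*}(\my_{i^*})\ge D_{i^*}(k)/n\ge d_{i^*}(\mx_{i^*})/n$.

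The step I expect to be the main obstacle is upgrading this non-strict bound to the strict inequality required by the Pareto-dominance definition ($d_i(\my_i)\le d_i(\mx_i)/n$ for all $i$ with at least one strict inequality). My plan is to split into cases: if for some agent the pigeonhole is strict, i.e.\ $\sum_{j\notin T_{t^*}} y_{i,j}>1/n$, the bound $d_i(\my_i)>d_i(\mx_i)/n$ is immediate; otherwise every agent has $\sum_{j\notin T_{t^*}} y_{i,j}=1/n$ exactly, which forces $\sum_{j\in T_{t^*}} y_{i,j}=(n-1)/n>0$ (using $n\ge 2$). This is exactly where the positivity hypothesis $d_{ij}>0$ becomes essential: the extra contribution $\sum_{j\in T_{t^*}} d_{ij}\,y_{ij}$ is then strictly positive for every agent, yielding the strict bound $d_i(\my_i)>d_i(\mx_i)/n$ universally and contradicting the hypothesized $n$-dominance. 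The necessity of the positivity assumption is indicated by Example~\ref{ex:zero-chore}, so this case split is the right place to spend the assumption.
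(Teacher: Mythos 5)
Your proposal is correct and takes a genuinely different route from the paper's proof. The paper fixes the alternative allocation $\mx$ and, for each agent $i$ separately, builds a set $T_i$ of $i$'s $t_i$ cheapest chores chosen so that $\mx_i$ puts less than $1/n$ mass outside $T_i$; it then derives $d_i(\mx_i) > d_{i,i_{t_i}}/n$, shows $|\bigcup_i T_i| \ge n$ by a mass-counting argument, and locates via a two-case analysis an agent $k$ whose PS consumption is entirely contained in $T_k$ (the trickiest part: if no such agent existed, a chore-exhaustion contradiction arises at the end of the eating process), giving $d_k(\my_k^{\text{PS}}) \le d_{k,k_{t_k}}$. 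Your proof instead works entirely from the PS side: the nested exhaustion sets $T_t$ are universal rather than per-agent, the bound $d_i(\mx_i^{\text{PS}}) \le D_i(k)$ follows immediately from the rate being non-decreasing and total eating time being $1$, and locating the right agent reduces to a single pigeonhole step over the alternative allocation's mass outside $T_{t^*}$. Notably, your analysis mirrors the goods-side proof of Theorem~\ref{thm:submod-eating-upper} (same exhaustion filtration, same "rate while chore $j$ is available" bound), which makes the goods and chores arguments parallel; the paper's chores proof does not exploit the exhaustion structure at all. Your handling of strictness via the positivity of the residual mass $\sum_{j \in T_{t^*}} d_{ij} y_{ij}$ is correct and plays the same role as the paper's argument that the two equality conditions in \eqref{ineq:chores-ti} cannot hold simultaneously. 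Two small points to tidy when writing this up: (i) the equality case in your pigeonhole argument needs $n\ge 2$; the $n=1$ case is degenerate (the PS allocation is already exactly Pareto-optimal) and should be dispatched separately; (ii) if the eating process does not end with an exhaustion event at time $1$, append a final dummy interval $(\tau_k,1]$ so that $\sum_t \delta_t = 1$ and the bound $d_i(\mx_i^{\text{PS}})\le D_i(k)$ reads cleanly.
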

\begin{proof}
    Let $\mx\in\R^{\agents\times\items}$ be any matching allocation and $\my\in\R^{\agents\times\items}$ be the allocation generated by the eating algorithm.
    Suppose that for each agent $i\in\agents$, the eating algorithm uses the following preference ordering: 
    $0 < d_{i,i_1} \leq d_{i,i_2} \leq \cdots \leq d_{i,i_m}$. For each agent $i$, define $t_i$ to be the smallest index such that $\sum_{s = t_i + 1}^{m} x_{i, i_s} < \frac{1}{n}$ and define the set $T_i \defeq \{i_1, i_2, \cdots, i_{t_i}\}$. 
    According to this definition, for any $i$,
    $$d_i(\mx_i)=\sum_{s = 1}^m d_{i, i_s} x_{i, i_s} \geq \sum_{s = t_i}^m d_{i, i_s} x_{i, i_s} \geq  \frac{1}{n} d_{i, i_{t_i}},$$
    where the first inequality becomes equality when $\sum_{s = 1}^{t_i - 1} x_{i, i_s} = 0$, and the second inequality becomes equality only when $\sum_{s = t_i}^m x_{i, i_s} = \frac{1}{n}$. Since $\{\mx_i\}_{i \in \agents}$ allocates one fractional unit, these two conditions cannot hold simultaneously. Therefore, the inequality is strict: \begin{align}
        \sum_{s = 1}^m d_{i, i_s} x_{i, i_s}  >  \frac{1}{n} d_{i, i_{t_i}}. \label{ineq:chores-ti}
    \end{align}
    We now show that there exists an agent $k\in\agents$ such that 
    \begin{equation}\label{eq:agent-k}
    d_{k, k_{t_{k}}} \geq d_k(\my_k)\, .
    \end{equation} Combining this with inequality \eqref{ineq:chores-ti} yields the result.

To achieve this, we claim that there exists some agent $k$ whose allocation under the eating algorithm is fully supported in $T_{k}$. First, by the definition of $t_i$, $\sum_{i\in\agents, j \in T_i} x_{ij} > n - 1$. This implies $|\cup_{i} T_i | \geq n $. We now consider two cases. 
\smallskip
    
    \noindent\textbf{Case $1$:} there exists a chore $j \in \cup_{i} T_i$ such that $\sum_i y_{ij} < 1$. In this case, consider the agent $k$ such that $j \in T_{k}$. Since chore $j$ is not fully consumed, agent $k$ must consume only $j$ or more preferred chores.  Therefore, agent $k$ satisfies \eqref{eq:agent-k}.
    \smallskip
    
    \noindent\textbf{Case $2$:} for every chore $j \in \cup_{i\in\agents} T_i$, $\sum_i y_{ij} = 1$. In this case, since $|\cup_{i\in \agents} T_i | \geq n $, we have $|\cup_{i} T_i | = n $ and all consumption is supported on $\cup_{i} T_i$. 
   Every agent $k$ who only consumes chores in $T_k$ satisfies \eqref{eq:agent-k}.
    Suppose for contradiction that for every agent $k$, there is a chore $j\notin T_k$ with $y_{ik}>0$. Then, at the end of the eating process, all agents must be consuming chores outside of their respective $T_i$. Consider any particular agent $\ell$, who is consuming chore $j\notin T_\ell$. Since $j \in \cup_{i\in\agents} T_i$, then there exists an agent $k$ such that $j \in T_{k}$. We however assumed that all agents are consuming chores outside their $T_i$ sets at this point; in particular, $T_k$ must have been fully consumed at this point. This yields a contradiction.
\end{proof}

Our next example shows that assuming positive disutilities above  is necessary, as zero disutilities  may result in unbounded inefficiency of the eating algorithm.
\begin{example}\label{ex:zero-chore}
    Consider a scenario with two agents and two chores. All disutilities are zero except that Agent $1$ has a disutility of $1$ for Chore $2$. Suppose both agents follow the same priority: they first consume Chore $1$, then Chore $2$. Then, since each chore is split equally when chosen simultaneously, Agent $1$ thus incurs a disutility of $0.5$ (half of Chore $2$), while Agent $2$ incurs $0$ disutility. However, a more efficient allocation exists: assign Chore $1$ entirely to Agent $1$ and Chore $2$ entirely to Agent $2$. This allocation results in disutilities of $(0, 0)$. By definition, the simultaneous eating algorithm is unboundedly inefficient in this case.
\end{example}

We close this section with the proof of the second part of Theorem~\ref{thm:chores-main}, an example when the disutility in the eating algorithm is worse by a factor $n/4$ for all agents.
\begin{example}
    Let $0 < \varepsilon_1 < \varepsilon_2 < \varepsilon_3 \cdots < \varepsilon_n  \ll 1$, and suppose there are $n$ agents and $n$ chores, with  $n$ being even. The disutilities are defined as follows: for the first $n / 2$ agents (for all $i \leq n / 2$), the disutility from each chore is:
    \begin{align*}
        d_{ij} =  \left\{ \begin{array}{rcl}
\varepsilon_j & \mbox{for}
& j < n \\ 1 + \varepsilon_n & \mbox{for} & j = n \\
\end{array}\right. \quad \mbox{for any $i \leq n / 2$;}
    \end{align*} 
    for the remaining $n / 2$ agent (for all $i > n / 2$), the disutility is: 
    \begin{align*}
        d_{ij} =  \left\{ \begin{array}{rcl}
\varepsilon_j & \mbox{for}
& j <  n / 2 \\ 1 + \varepsilon_j & \mbox{for} & j \geq n / 2 \\
\end{array}\right. \quad \mbox{for any $i > n / 2$.}
    \end{align*} 
    
    Under the eating algorithm, each agents consumes $\frac{1}{n}$ fraction of all chores. This results in a disutility of $\frac{1 + \sum_{j=1}^n \varepsilon_j}{n}$ for each of the first $n / 2$ agents and $\frac{n / 2 + 1 + \sum_{j=1}^n \varepsilon_j}{n}$ for each of the remaining $n / 2$ agents.

    Now, we consider an alternative allocation $\mx\in\R^{\agents\times \items}$, where the first $n / 2$ agents equally divide chores $j = n / 2, \cdots, n - 1$, and the remaining $n / 2$ agents equally divide chores $j = 1, \cdots, n / 2 - 1$ and $j = n$. In this allocation, the first $n / 2$ agents receive a disutility of $\frac{\sum_{j = n / 2}^{n-1} \varepsilon_j}{n / 2}$ and the remaining $n / 2$ agents receive a  disutility of $\frac{1 + \varepsilon_n + \sum_{j = 1}^{n / 2 - 1} \varepsilon_j}{n / 2}$. 

    When $\varepsilon_j \rightarrow 0$ for all $j$, the disutility under the eating algorithm becomes at least $n / 4$ worse than $\{\mx_i\}_{i \in \agents}$.
\end{example}

%\bibliographystyle{alpha}
%\bibliographystyle{ACM-Reference-Format}
%\bibliography{eating}
\printbibliography
%\printbibliography[heading=bibintoc]
\appendix
%!TEX root=main.tex
\section{Nash welfare guarantee of matching market equilibria}\label{sec:poa-appendix}
In this appendix, we adapt the proof from \cite{garg2025approximating} to matching markets. 
\poathm*
\begin{proof}
Let $\my\in\Rp^{\agents\times\items}$ be any feasible allocation, i.e., $\sum_{i\in\agents} y_{ij}\le \supply_j$ for any item $j\in\items$ and $\sum_{j\in\items} y_{ij}\le 1$ for any $i\in\agents$.

First, $u_i(\mx_i) \geq \min \left\{1, \frac{1}{\pr{\mp}{\my_i}} \right\}u_i(\my_i)$, as $\frac{\my_i}{\max\{1, \pr{\mathbf{p}}{\my_i}\}}$ is a feasible solution to the demand correspondence $\demand{}_i(\mathbf{p})$, and the utilities are linear.
Thus,
    \begin{align*}
        \sum_{i\in\agents}  \log \frac{u_i(\my_i)}{u_i(\mx_i)} \leq  \sum_{i\in\agents}  \log \max \left\{1, \pr{\mp}{\my_i} \right\}.
    \end{align*}
    Let $b'_i \defeq {\pr{\mp}{\my_i}}$, and let $\agents'$ be the set of agents such that $b'_i \geq 1$, and $B' \defeq \sum_{i \in \mathcal{A}'} b'_i$. Let $\alpha=|\agents'|/n$; by definition, $\alpha\le 1$. From the log sum inequality,
    \begin{align*}
           \sum_{i\in\agents}  \log \frac{u_i(\my_i)}{u_i(\mx_i)} \leq \sum_{i \in \agents'} \log b'_i \leq \alpha n \log \frac{B'}{\alpha n} \, .
    \end{align*}
 Noting also that $B'\le \sum_{i\in\agents} b_i'=\sum_{i\in \agents} \pr{\mp}{\my_i} \le n$, 
    \begin{align*}
         \sum_{i\in\agents} \frac{1}{n} \log \frac{u_i(\my_i)}{u_i(\mx_i)} \leq \alpha \log \frac{1}{\alpha} \leq \frac{1}{\ee}\, .
    \end{align*}
    The statement follows.
\end{proof}

\section{Finding approximately envy-free and Pareto-efficient allocations}\label{appendix:2-ef-po}
In this section, we show that the optimal solution to the following convex program, which maximizes the Nash welfare, is $2$-approximately envy-free and Pareto efficient.
\begin{equation}
\begin{aligned}
    \max_{\my \in \Rp^{\agents \times \items}} \quad & \sum_i \log u_i(\my_i) \\
    \text{s.t.} \quad 
    & \sum_{j \in \items} y_{ij} \leq 1 \quad \forall i \in \agents \\
    & \sum_{i \in \agents} \my_i \in \fP(\rho).  \label{cp::approximateEF-PO}
\end{aligned}
\end{equation}
\begin{lemma}
    The optimal solution to \eqref{cp::approximateEF-PO} yields an allocation that is $2$-approximately envy-free and Pareto efficient.
\end{lemma}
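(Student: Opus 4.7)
My plan is to prove the two properties separately, with Pareto efficiency being essentially immediate and $2$-approximate envy-freeness following from a first-order optimality argument via a single mass-transfer perturbation.

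For Pareto efficiency, I would argue by contradiction: if $\my^\star$ is an optimum of \eqref{cp::approximateEF-PO} and $\mz$ Pareto dominates it, then $u_i(\mz_i) \ge u_i(\my^\star_i)$ for all $i$ with strict inequality somewhere, so $\sum_i \log u_i(\mz_i) > \sum_i \log u_i(\my^\star_i)$, contradicting optimality (after checking that $\mz$ is also feasible, which is part of the definition of Pareto dominance here).

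For $2$-approximate envy-freeness, suppose for contradiction that some agent $i$ envies agent $k$ by more than a factor $2$, i.e., $u_i(\my^\star_k) > 2\, u_i(\my^\star_i)$. I would construct the one-parameter perturbation
\[
\my_i(t) = (1-t)\my^\star_i + t\my^\star_k,\qquad \my_k(t) = (1-t)\my^\star_k,\qquad \my_\ell(t)=\my^\star_\ell\ \forall \ell\neq i,k,
\]
for small $t>0$. The assignment constraint $\sum_j y_{ij}(t)\le 1$ is preserved for agent $i$ by convexity and for agent $k$ by the factor $(1-t)$. The aggregate consumption changes by $-t\my^\star_i\le 0$ coordinatewise, so since $\rho$ is monotone the submodular polyhedron $\fP(\rho)$ is downward closed on $\Rp^{\items}$ and the new aggregate remains in $\fP(\rho)$. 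Hence $\my(t)$ is feasible.

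The main step is to compute the derivative of the objective at $t=0$ using linearity of $u_i$. Setting $r = u_i(\my^\star_k)/u_i(\my^\star_i)$, the objective restricted to this perturbation is
\[
F(t) = \log\bigl((1-t)u_i(\my^\star_i)+t\, u_i(\my^\star_k)\bigr) + \log(1-t) + \log u_k(\my^\star_k) + \text{const},
\]
which gives $F'(0) = (r-1) - 1 = r-2$. If $r>2$ then $F'(0)>0$, so a small positive $t$ strictly improves the objective, contradicting optimality of $\my^\star$. Therefore $r \le 2$, which is exactly $2$-approximate envy-freeness. The only subtlety I anticipate is handling the case $u_i(\my^\star_i)=0$, which I would exclude by a standard genericity/positivity assumption on the instance or by noting that $\log$ takes $-\infty$ there so the perturbation must also strictly improve the NSW in that degenerate case as well. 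Everything else—feasibility of the perturbation and the derivative computation—is essentially routine once the right perturbation is identified.
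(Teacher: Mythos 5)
Your proof is correct, and it takes a genuinely different route from the paper's. The paper's proof for $2$-approximate envy-freeness is a reduction: it observes that the optimizer $\my^\star$ of~\eqref{cp::approximateEF-PO} is also the optimizer of the fixed-supply program~\eqref{cp::restrictedSupply} (with supplies $\sum_i y^\star_{ij}$), and then invokes the $2$-envy-freeness framework of~\cite{garg2025approximating,troebst2024cardinal} for that standard Fisher/matching-market setting. You instead give a direct, self-contained first-order perturbation argument that works immediately under the submodular constraint: the key observation is that your perturbation only ever decreases the aggregate consumption coordinatewise (by $t\my^\star_i$), and any such decrease trivially preserves the inequalities $\sum_{j\in S} z_j\le\rho(S)$ defining $\fP(\rho)$ (note: this is automatic from the form of the constraints and does not actually require monotonicity of $\rho$, which you attribute it to---a harmless imprecision). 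Your derivative computation $F'(0)=r-2$ and the ensuing contradiction for $r>2$ are correct, as is your handling of feasibility of the agent-level constraints and nonnegativity. What your approach buys is transparency and self-containment---the reader sees exactly where the constant $2$ comes from---whereas the paper's reduction is shorter but pushes the real work into the cited references, which presumably contain a variational argument closely related to yours. Your remark about $u_i(\my^\star_i)=0$ is the right thing to flag; it is handled either by noting NSW would be $-\infty$ or by observing that if $u_{ij}\equiv0$ for some agent then that agent cannot envy anyone.
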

\begin{proof}
    Pareto efficiency follows from the fact that we are maximizing the Nash Welfare. The $2$-envy-freeness of the allocation follows from the results in~\cite{garg2025approximating, troebst2024cardinal}. Since the optimal solution $\my^*$ to \eqref{cp::approximateEF-PO} is also the optimal solution of the following program:
\begin{equation}
    \begin{aligned}
        \max_{\my \in \Rp^{\agents \times \items}} \quad & \sum_i \log  u_i(\my_i) \\
        \text{s.t.} \quad 
        & \sum_{j \in \items} y_{ij} \leq 1 \quad \forall i \in \agents \\
        & \sum_{i \in \agents} y_{ij} \leq \sum_{i \in \agents} y_{ij}^* \quad \forall j \in \items.\label{cp::restrictedSupply}
    \end{aligned}
    \end{equation}
    The framework established in~\cite{garg2025approximating, troebst2024cardinal} proves the optimal solution in \eqref{cp::restrictedSupply} is $2$-envy-free.
\end{proof}
\end{document}